\def\keywordname{{\bf Key words:}}
\newcommand{\R}{\mathbb{R}}
\newcommand{\p}{\mathbb{P}}
\newcommand{\E}{\mathbb{E}}
\newcommand{\1}{\mathbf{1}}
\newcommand{\g}{\, | \,}
\newcommand{\x}{\mathbf{x}}
\newcommand{\X}{\mathbf{X}}
\newcommand{\A}{\mathbf{A}}
\newcommand{\bigO}{\mathcal{O}}
\newtheorem{thm}{Theorem}[section]
\newtheorem{lem}[thm]{Lemma}
\newcommand{\keywords}[1]{\par\addvspace\baselineskip
\noindent\keywordname\enspace\ignorespaces#1}
\begin{document}

\title{Multivariate convex regression with adaptive partitioning}

\author{Lauren A. Hannah, David B. Dunson}

\date{\today}

\maketitle


\begin{abstract}
We propose a new, nonparametric method for multivariate regression subject to convexity or concavity constraints on the response function.  Convexity constraints are common in economics, statistics, operations research, financial engineering and optimization, but there is currently no multivariate method that is computationally feasible for more than a few hundred observations.  We introduce Convex Adaptive Partitioning (CAP), which creates a globally convex regression model from locally linear estimates fit on adaptively selected covariate partitions. 
CAP is computationally efficient, in stark contrast to current methods. The most popular method, the least squares estimator, has a computational complexity of $\bigO(n^3)$. We show that CAP has a computational complexity of $\bigO(n \log(n)\log(\log(n)))$ and also give consistency results. CAP is applied to value function approximation for pricing American basket options with a large number of underlying assets.
\keywords{Nonparametric regression, shape constraint, convex regression, treed linear model, adaptive partitioning}
\end{abstract}

\section{Introduction}

Consider the regression model for $\x \in \mathcal{X} \subset \R^p$ and $y \in \R,$
$$y = f_0(\x) + \epsilon,$$
where $f_0:\R^p \rightarrow \R$ and $\epsilon$ is a mean 0 random variable.  In this paper, we study the situation where $f_0$ is subject to a convexity constraint.  That is, 
\begin{equation}\notag
\lambda f_0(\x_1) + (1-\lambda)f_0(\x_2) \geq f_0(\lambda \x_1 + (1-\lambda)\x_2),
\end{equation}for every $\x_1,\x_2 \in \mathcal{X}$ and $\lambda \in (0,1)$. Given the observations $(\x_1,y_1),\dots,(\x_n,y_n)$, we would like to estimate $f_0$ subject to the convexity constraint; this is called the convex regression problem.  Note that convex regression is easily extended to concave regression since a concave function is the negative of a convex function.

Convex regression problems occur in a variety of settings. Economic theory dictates that demand~\citep{Va82}, production~\citep{Va84,AlBeHa07} and consumer preference~\citep{BoVa04} functions are often concave. In financial engineering, stock option prices usually have convexity restrictions~\citep{AiDu03}. Stochastic optimization problems, studied in operations research and reinforcement learning, have response surfaces~\citep{Li10} or value-to-go functions that exhibit concavity in many settings, like resource allocation~\citep{ToPo03,Po07,ToNeSa10} or stochastic control~\citep{KeWaBo11}. Similarly, efficient frontier methods like data envelopment analysis~\citep{KuJo10} include convexity constraints. In statistics, shape restrictions like log-concavity are useful in density estimation~\citep{CuSaSt10,CuSa10,ScDu10}. Finally, in optimization, convex approximations to posynomial constraints are valuable for geometric programming~\citep{KiLeVa04,BoKiVa07,MaBo09}. Although convex regression has been well-explored in the univariate setting, the literature remains underdeveloped in the multivariate setting. Existing methods do not scale well to more than a few thousand observations or more than a handful of dimensions.

In this paper, we introduce the first computationally efficient, theoretically sound multivariate convex regression method, called Convex Adaptive Partitioning (CAP). It relies on an alternate definition of convexity,
\begin{equation}\label{eq:convexity}
f_0(\x_1) \geq f_0(\x_2) + g_0(\x_1)^T (\x_1-\x_2),
\end{equation} for every $\x_1,\x_2 \in \mathcal{X}$, where $g_0(\x) \in \partial f_0(\x)$ is a subgradient of $f_0$ at $\x$.  Equation (\ref{eq:convexity}) states that a convex function lies above all of its supporting hyperplanes, or subgradients tangent to $f_0$. Moreover, with enough supporting hyperplanes, $f_0$ can be approximately reconstructed by taking the maximum over those hyperplanes.

The CAP estimator is formed by adaptively partitioning a set of observations. Within each subset of the partition, we fit a linear model to approximate the subgradient of $f_0$ within that subset. Given a partition with $K$ subsets and linear models, $(\alpha_k,\beta_{k})_{k=1}^K$, a continuous, convex (concave) function is then generated by taking the maximum (minimum) over the hyperplanes by
\begin{align}\notag
\hat{f}_{n}(\x) & = \max_{k \in \{1,\dots,K\} }\alpha_k + \beta_k^T \x.
\end{align}
The partition is refined by a twofold strategy. First, one of the subsets is split along a cardinal direction (say, $x_1$ or $x_3$) to grow $K$. Then, the hyperplanes themselves are used to refit the subsets. A piecewise linear function like $\hat{f}_n$ induces a partition; a subset is defined as the region where a particular hyperplane is dominant. The refitting step places the hyperplanes in closer alignment with the observations that generated them. This procedure is repeated until all subsets have a minimal number of observations. The CAP estimator is then created by selecting the value of $K$ that balances fit with complexity using a generalized cross validation method~\citep{GoHeWa79,Fr91}. 


CAP has strong theoretical properties, both in terms of computational complexity and asymptotic properties. We show that CAP is consistent with respect to the $\ell_{\infty}$ metric and has a computational complexity of $\bigO(p(p+1)^2 n \log(n)\log( \log(n)))$ flops. The most widely implemented convex regression method, the least squares estimator, has only recently been shown to be consistent~\citep{SeSe11,LiGl11} and has a computational complexity of $\bigO((p+1)^3n^3)$ flops. Despite a difference of almost $\bigO(n^2)$ runtime, the CAP estimator usually has better predictive error as well. Because of its dramatic reduction in runtime, CAP opens a new class of problems for study, namely moderate to large problems with convexity or concavity constraints.

The rest of this paper is organized as follows. In Section \ref{sec:litReview}, we review the literature on convex regression. In Section \ref{sec:CAP}, we present the CAP algorithm. In Section \ref{sec:theory}, we give computational complexity results and conditions for consistency. In Section \ref{sec:implementation}, we derive a generalized cross-validation method and give a fast approximation for the full CAP algorithm. In Section \ref{sec:numbers}, we empirically test CAP on convex regression problems, including value function estimation for pricing American basket options. In Section \ref{sec:conclusions}, we discuss our results and give directions for future work.

\section{Literature Review}\label{sec:litReview}





The literature for nonparametric convex regression is dispersed over a variety of fields, including statistics, operations research, economics, numerical analysis and electrical engineering. There seems to be little communication between the fields, leading to the independent discovery of similar techniques.

In the univariate setting, there are many computationally efficient algorithms for convex regression. These methods rely on the ordering implicit to the real line. Setting $x_{i-1} < x_i < x_{i+1}$ for $i = 2,\dots, n$,
\begin{align}\label{eq:ordering}
\frac{f_0(x_i) - f_0(x_{i-1})}{x_i - x_{i-1}} & \leq \frac{f_0(x_{i+1}) - f_0(x_i)}{x_{i+1} - x_i}, & i & = 2,\dots, n, 
\end{align} is a sufficient constraint for pointwise convexity. When $f_0$ is differentiable, Equation (\ref{eq:ordering}) is equivalent to an increasing derivative function. 

Various methods have been used to solve the univariate convex regression problem. The least squares estimator (LSE) is the oldest and simplest method. It produces a piecewise linear estimator by solving a quadratic program with $n-2$ linear constraints~\citep{Hi54,De73}. Although the LSE is completely free of tunable parameters, the estimator is not smooth and can overfit, particularly in the multivariate setting. Consistency, rate of convergence, and asymptotic distribution of the LSE were shown by \citet{HaPl76}, \citet{Ma91} and \citet{GrJoWe01}, respectively. Algorithmic methods for solving the quadratic program were given in \citet{Wu82,Dy83} and \citet{FrMa89}.

Spline methods have also been popular. \citet{Me08} and \citet{MeHaHo11} used convex-restricted splines with positive parameters in frequentist and Bayesian settings, respectively. \citet{Tu05} and \citet{ShWaDa11} used unrestricted splines with restricted parameters, likewise, in frequentist and Bayesian settings. In other methods, \citet{BiDe07} used convexity constrained kernel regression; \citet{ChChHs07} used a random Bernstein polynomial prior with constrained parameters; and \citet{KoMaPo10} transformed the ordering problem into a combinatorial optimization problem which they solved with dynamic programming.

Due to the constraint on the derivative of $f_0$, univariate convex regression is quite similar to univariate isotonic regression. The latter has been studied extensively with many approaches; for examples, see \citet{Br55,HaHu01,NeDu04} and \citet{ShSaWa09}.

Unlike the univariate setting, convex functions in multiple dimensions cannot be represented by a simple set of first order conditions and projection onto the set of convex functions becomes computationally intensive. As in the univariate case, the earliest and most popular regression method is the LSE, which directly projects a least squares estimator onto the cone of convex functions. It was introduced by \citet{Hi54} and \citet{Ho79}. The estimator is found by solving the quadratic program,
\begin{align}\label{eq:lse}
\min & \sum_{i=1}^n \left(y_i - \hat{y}_i\right)^2 \\\notag
\mathrm{subject \ to \ } & \hat{y}_j \geq \hat{y}_i + \mathbf{g}_i^T(\x_j - \x_i), \ \ \ i,j = 1,\dots,n.
\end{align}Here, $\hat{y}_i$  and $\mathbf{g}_i$ are the estimated values of $f_0(\x_i)$ and the subgradient of $f_0$ at $\x_i$, respectively. The estimator $\hat{f}_n^{LSE}$ is piecewise linear,
\begin{equation}\notag
\hat{f}_n^{LSE}(\x) = \max_{i \in \{1,\dots,n\}} \hat{y}_i + \mathbf{g}_i^T(\x - \x_i).
\end{equation}The characterization~\citep{Ku08} and consistency~\citep{SeSe11,LiGl11} of the least squares problem have only recently been studied. The LSE quickly becomes impractical due to its size:  Equation (\ref{eq:lse}) has $n(n-1)$ constraints. This results in a computational complexity of $\bigO((p+1)^3 n^3)$, which becomes impractical after one to two thousand observations.


While the LSE is widely studied across all fields, the remaining literature on multivariate convex regression is sparser and more dispersed than the univariate literature. One approach is to place a positive semi-definite restriction on the Hessian of the estimator. In the economics literature, \citet{HePa09} used kernel smoothing with a restricted Hessian and found a solution with sequential quadratic programming. In electrical engineering, \citet{RoChCh07}, and in a variational setting, \citet{AgMo08} and \citet{AgMo09}, used semi-definite programming to search the space of functions with positive semi-definite local Hessians. Although consistent in some cases~\citep{AgMo08,AgMo09}, Hessian methods are computationally intensive and can be poorly conditioned in boundary regions. In another approach, \citet{AlBeHa07} proposed a method based on reformulating the maximum likelihood problem as one minimizing entropic distance, which can be solved as a linear program. However, like the original maximum likelihood problem, the transformed problem still has $n^2$ constraints and does not scale to more than a few thousand observations. 

Recently, multivariate convex regression methods have been proposed with a more traditional statistics approach. \citet{AgFoMo11} proposed a two step smoothing and fitting process. First, the data were smoothed and functional estimates were generated over an $\epsilon$-net over the domain. Then the convex hull of the smoothed estimate was used as a convex estimator. Again, although this method is consistent, it is sensitive to the choice of smoothing parameter and does not scale to more than a few dimensions. \citet{HaDu11c} proposed a Bayesian model that placed a prior over the set of all piecewise linear models. They were able to show adaptive rates of convergence, but the inference algorithm did not scale to more than a few thousand observations.

In a more computational approach, \citet{MaBo09} use an iterative fitting scheme. In this method, the data were divided into $K$ random subsets and a linear model was fit within each subset; a convex function was generated by taking the maximum over these hyperplanes. The hyperplanes induce a new partition, which is then used to refit the function. This sequence was repeated until convergence. Despite relatively strong empirical performance, this method is sensitive to the initial partition and the choice of $K$. Moreover, it is not consistent and there are cases when the algorithm does not even converge.


\section{The CAP Algorithm}\label{sec:CAP}


As seen in much of the literature, a natural way to model a convex function $f_0$ is through the maximum of a set of hyperplanes. One example of this method is the least squares estimator, which fits every observation with its own hyperplane. This is computationally expensive and can result in overfitting, as shown in Figure \ref{fig:CAPvLSE}. Instead, we wish to model $f_0$ through only $K$ hyperplanes. We do this by partitioning the covariate space and approximating the gradients within each region by hyperplanes generated by the least squares estimator. The covariate space partition and $K$ are chosen through adaptive partitioning.

\begin{figure}
\begin{center}
\includegraphics[width=3.22in, viewport = 110 260 510 550]{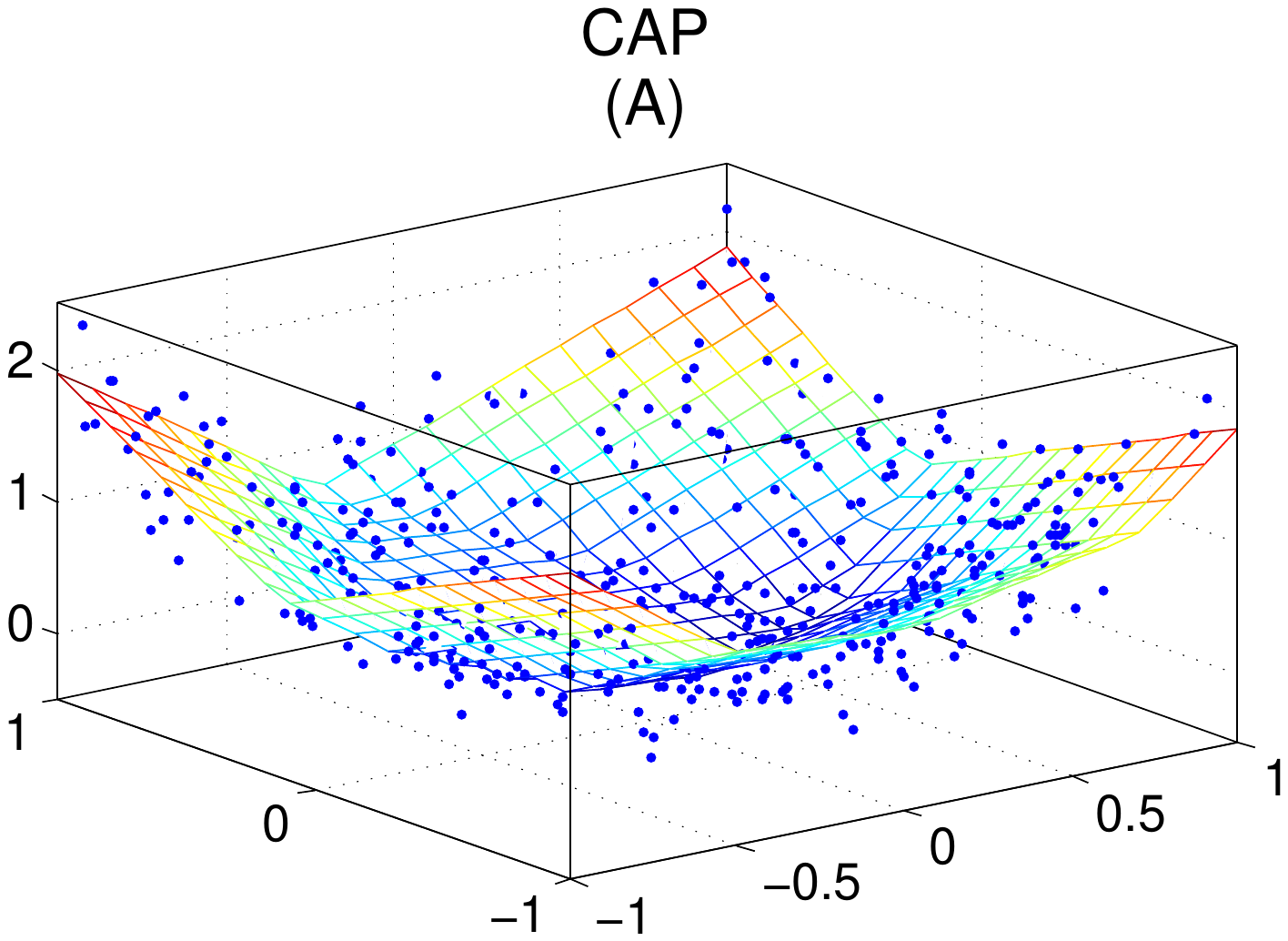}
\includegraphics[width=3.22in, viewport = 110 260 510 550]{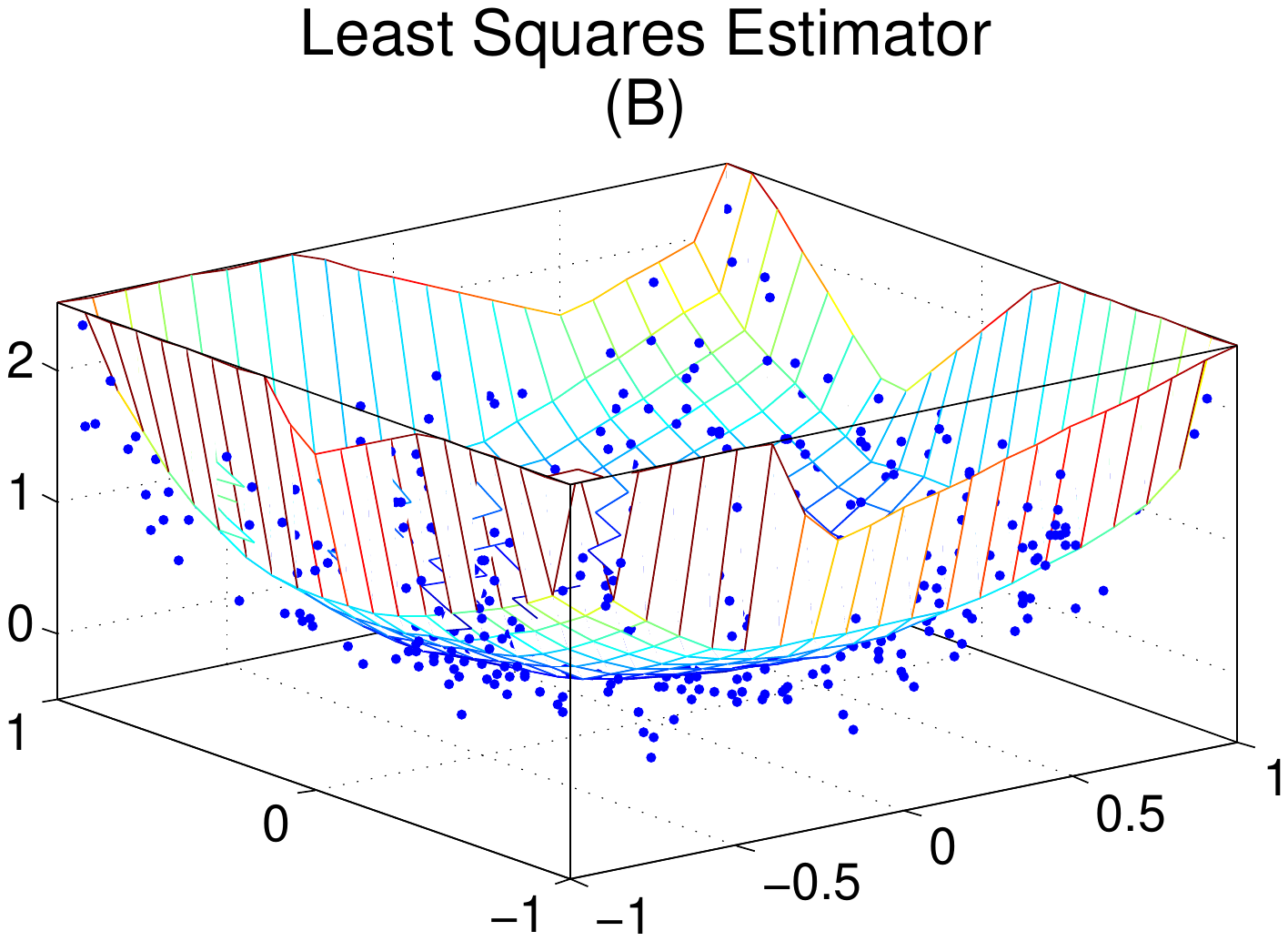}
\end{center}
\caption{(A) The CAP estimator, and (B) the LSE fit to 500 observations drawn from $y = x_1^2 + x_2^2 + \epsilon,$ where $\epsilon \sim N(0,0.25^2)$. The covariates were drawn from a 2 dimensional uniform distribution, $\mathrm{Unif}[-1,1]^2$. The LSE was truncated at predicted values of 2.5 for display, although some predicted values reached as high as 4,800 on $[-1,1]^2$.}
\label{fig:CAPvLSE}
\end{figure}

Given a partition $\{A_1,\dots,A_K\}$ of $\mathcal{X}$, an estimate of the gradient for each subset can be created by taking the least squares linear estimate based on all of the observations within that region,
\begin{equation}\notag
(\alpha_k, \beta_k) = \arg \min_{\alpha,\beta} \sum_{i \, : \, \x_i \in A_k} \left(y_i - \alpha - \beta^T \x_i\right)^2.
\end{equation} A convex function $\hat{f}$ can be created by taking the maximum over $(\alpha_k,\beta_k)_{k=1}^K$,
\begin{equation}\notag
\hat{f}(\x) = \max_{k \in \{ 1,\dots,K\}} \alpha_k + \beta_k^T \x.
\end{equation}

Models adaptive partitioning models with linear leaves have been proposed before; see \citet{ChHuLo94,ChLoLo95,AlGr96,No96,DoGe02,GyKoKr02} and \citet{PoDu05} for examples. In most of these cases, the partition is created by adaptively refining an existing partition by dyadic splitting of one subset. The split is chosen in a way that minimizes local error within the subset. There are two problems with these partitioning methods that arise when a piecewise linear summation function,
\begin{equation}\notag
f^*(\x) = \sum_{k = 1}^K \left(\alpha_k + \beta_k^T\x\right) \1_{\{ \x \in A_k\}},
\end{equation} is changed into a piecewise linear maximization function, like $\hat{f}$. First, a split that minimizes local error does not necessarily minimize global error for $\hat{f}$. This is fairly easy to remedy by considering splits based on minimizing global error. The second problem is more difficult: the gradients often act in areas over which they were not estimated.


A piecewise linear maximization function, $\hat{f}$, generates a new partition, $\{A^{\prime}_1,\dots,A_K^{\prime}\}$, by
\begin{equation}\notag
A_k^{\prime} = \left\{ \x \in \mathcal{X} \, : \, \alpha_k + \beta_k^T \x > \alpha_j + \beta_j^T \x, \forall \ j \neq k\right\}.
\end{equation} The partition $\{A_1,\dots,A_K\}$ is not necessarily the same as $\{A^{\prime}_1,\dots,A^{\prime}_K\}$. We can use this new partition to refit the hyperplanes and produce a significantly better estimate. Refitting hyperplanes in this manner can be viewed as a Gauss-Newton method for the non-linear least squares problem~\citep{MaBo09},
\begin{equation}\notag
\mathrm{minimize} \ \sum_{i=1}^n \left(y_i - \max_{k\in \{1,\dots,K\}} \left(\alpha_k + \beta_k^T \x_i\right) \right)^2.
\end{equation} Similar methods for refitting hyperplanes have been proposed in \citet{Br93} and \citet{MaBo09}. However, repeated refitting may not converge to a stationary partition and is sensitive to the initial partition.

Convex Adaptive Partitioning (CAP) uses adaptive partitioning with linear leaves to fit a convex function that is defined as the maximum over the set of leaves. The adaptive partitioning itself differs from previous methods in order to fit piecewise linear maximization functions. Partitions are refined in two steps. First, candidate splits are generated through dyadic splits of existing partitions. These are evaluated and the one that minimizes global error is greedily selected. Second, the new partition is then refit. Although simple, these rules, and refitting in particular, produce large gains over naive adaptive partitioning methods; empirical results are discussed in Section \ref{sec:numbers}.

Most other adaptive partitioning methods use backfitting or pruning to select the tree or partition size. Due to the construction of the CAP estimator, we cannot locally prune and so instead we rely on model selection criteria. We derive a generalized cross-validation method for this setting that is used to select $K$. This is discussed in Section \ref{sec:gcv}.

Since CAP shares many feature with existing adaptive partitioning algorithms, we are able to use many adaptive partitioning results to study CAP practically and theoretically. This is in stark contrast to the least squares estimator. Despite being introduced by \citet{Hi54}, it has not been implemented in many practical settings~\citep{Li10} and has only very recently been shown to be consistent~\citep{SeSe11,LiGl11}. 

\subsection{The Algorithm}
We now introduce some notation required for Convex Adaptive Partitioning. When presented with data, a partition can be defined over the covariate space (denoted by $\{A_1,\dots,A_K\}$, with $A_k \subseteq \mathcal{X}$) or over the observation space (denoted by $\{C_1,\dots,C_K\}$, with $C_k \subseteq \{1,\dots,n\}$). The observation partition is defined from the covariate partition,
\begin{equation}\notag
C_k = \left\{ i \, : \, \x_i \in A_k\right\}, \quad k = 1,\dots, K.
\end{equation}
CAP proposes and searches over a set of models, $M_1,\dots,M_K$. A model $M_k$ is defined by: 1) the covariate partition $\{A_1,\dots,A_K\}$, 2) the corresponding observation partition, $\{C_1,\dots,C_k\}$, and 3) the hyperplanes $(\alpha_j,\beta_j)_{j=1}^k$ fit to those partitions.

The CAP algorithm progressively refines the partition until each subset cannot be split without one subset having fewer than a minimal number of observations, $n_min$, where
\begin{equation}\notag
n_{min} = \min \left\{ \frac{n}{D \log (n)}, 2(d+1)\right\}.
\end{equation} Here $D$ is a log scaling factor, which acts to change the base of the log operator. This minimal number is chosen so that 1) there are enough observations to accurately fit a hyperplane, and 2) there is a lower bound on the growth rate for the number of observations in each subset---and an upper bound on the number of subsets. This is used to show consistency.

We briefly outline the CAP algorithm below. Details are given in the following subsections.

\noindent \paragraph{Convex Adaptive Partitioning (CAP)}
\begin{enumerate}
	\item {\bf Initialize.} Set $K=1$; place all observations into a single observation subset, $C_1 = \{1,\dots,n\}$; $A_1 = \mathcal{X}$; this defines model $M_1$.
	\item{\bf Split.}\label{item:split} Refine partition by splitting a subset.
	\begin{enumerate}[a.]
		\item{\it Generate candidate splits.} Generate candidate model $\hat{M}_{kj\ell}$ by 1) fixing a subset $k$, 2) fixing a dimension $j$, 3) dyadically dividing the data in subset $k$ and dimensions $j$ according to knot $a_{\ell}$. This is done for $L$ knots, all $p$ dimensions and $K$ subsets.
	\item {\it Select split.} Choose the model $M_{K+1}$ from the candidates that minimizes global mean squared error on the training set and satisfies $\min_k |C_k| \geq n_{min}$. Set $K = K+1$.
	\end{enumerate}
	\item {\bf Refit.} Use the partition induced by the hyperplanes to generate model $\hat{M}_k$. Set $M_K = \hat{M}_K$ if for every subset $\hat{C}_k$ in $\hat{M}_k$, $|\hat{C}_k| \geq n_{min}$. 
	\item {\bf Stopping conditions.} If for every subset $C_k$ in $M_k$, $|C_k| < 2n_{min}$, stop fitting and proceed to step \ref{item:modelSize}. Otherwise, go to step \ref{item:split}. 
	\item\label{item:modelSize} {\bf Select model size.} Each model $M_k$ creates an estimator, $$\hat{f}_{nk}(\x) = \max_{j\in \{1,\dots,k\}} \alpha_j + \beta_j^T \x.$$
	Use generalized cross-validation on the estimators to select final model $M^*$ from $\{M_k\}_{k=1}^K$. 
\end{enumerate}


\subsection{Splitting Rules}To split, we create a collection of candidate models by dyadically splitting a single subset. Since the best way to do this is not apparent, we create models for every subset and search along every cardinal direction by splitting the data along that direction. We create model $\hat{M}_{kj\ell}$ by 1) fixing subset $k \in \{1,\dots,K\}$, and 2) fixing dimension $j\in \{1,\dots,p\}$. Let $x_{min}^{kj}$ be the minimum value and $x_{max}^{kj}$ be the maximum value of the covariates in this subset and dimension,
\begin{align}\notag
x_{min}^{kj} & = \min\{ x_{ij} \, : \, i \in C_k\}, & x_{max}^{kj} & = \max\{ x_{ij} \, : \, i \in C_k\}.
\end{align}Let $0 <  a_1 <\dots < a_L < 1$ be a set of evenly spaced knots that represent the proportion between $x_{min}^{kj}$ and $x_{max}^{kj}$. 

Use the weighted average $b_{kj\ell} = a_{\ell} x_{min}^{kj} + (1- a_{\ell})x_{max}^{kj}$ to split $C_k$ and $A_k$ in dimension $j$. Set
\begin{align}\notag
\hat{C}_k & = \{i : i \, \in C_k, \, x_{ij} \leq b_{kj\ell}\}, & \hat{C}_{K+1} & = \{i : i \, \in C_k, \, x_{ij} > b_{kj\ell}\},\\\notag
\hat{A}_k & = \{\x : \x \, \in A_k, \, x_{j} \leq b_{kj\ell}\}, & \hat{A}_{K+1} & = \{\x : \x \, \in A_k, \, x_{j} > b_{kj\ell}\}.
\end{align}These define new subset and covariate partitions, $\hat{C}_{1:K+1}$ and $\hat{A}_{1:K+1}$ where $\hat{C}_{k^{\prime}} = C_{k^{\prime}}$ and $\hat{C}_{k^{\prime}} = C_{k^{\prime}}$ for $k^{\prime}\neq k$. Fit hyperplanes $(\hat{\alpha}_k,\hat{\beta}_k)_{k=1}^{K+1}$ in each of the subsets. The triplet of observation partition $\hat{C}_{1:K+1}$, covariate partition, $\hat{A}_{1:K+1}$, and set of hyperplanes $(\hat{\alpha}_k,\hat{\beta}_k)_{k=1}^{K+1}$ defines the model $\hat{M}_{kj\ell}$. This is done for $k = 1,\dots,K$, $j = 1,\dots,p$ and $\ell = 1,\dots,L$. After all models are generated, set $K = K+1$.

We note that any models where $\min_k |\hat{C}_k| < n_{min}$ are discarded. If all models are discarded in one subset/dimension pair, we produce a model by splitting on the subset median in that dimension. 


\subsection{Split Selection}
We select the model $\hat{M}_{kj\ell}$ that gives the smallest {\it global} error. Let $(\alpha_{i}^{kj\ell},\beta_i^{kj\ell})_{i=1}^K$ be the hyperplanes associated with $\hat{M}_{kj\ell}$ and let
\begin{align}\notag
\hat{f}_{nk}^{j\ell} (\x) & = \max_{i \in \{1,\dots,K\}} \alpha_{i}^{kj\ell} + {\beta_i^{kj\ell}}^T \x
\end{align} be its estimator. We set the model $M_K$ to be the one that minimizes global mean squared error,
\begin{align}\notag
M_K = \left\{ \hat{M}_{kj\ell} \, : \, (k,j,\ell) = \arg \min_{k,j,\ell} \frac{1}{n} \sum_{i=1}^n \left( y_i - \hat{f}_{nk}^{j\ell}(\x_i)\right)^2\right\}.
\end{align}Set $\hat{f}_{nK}$ to be the minimal estimator.

\subsection{Refitting}
We refit by using the partition induced by the hyperplanes.  Let $(\alpha_{1:K},\beta_{1:K})$ be the hyperplanes associated with $M_K$. Refit the partitions by $$\hat{C}_k = \{\x_i:\alpha_k + \beta_k^T\x_i \geq \alpha_j + \beta_j^T\x_i, j \neq k\}$$ for $k = 1,\dots,K$. The covariate partition, $\hat{A}_{1:K}$ is defined in a similar manner. Fit hyperplanes in each of those subsets. Let $\hat{M}_K$ be the model generated by the partition $\hat{C}_1,\dots,\hat{C}_K$. Set $M_K = \hat{M}_K$ if $|\hat{C}_k| \geq n_{min}$ for all $k$.


\subsection{Tunable Parameters}
CAP has two tunable parameters, $L$ and $D$. $L$ specifies the number of knots used when generating candidate models for a split. Its value is tied to the smoothness of $f_0$ and after a certain value, usually 5 to 10 for most functions, higher values of $L$ offer little fitting gain.

The parameter $D$ is used to specify a minimum subset size, $|C_k| \geq n/(D \log(n))$. Here $D$ transforms the base of the logarithm from $e$ into $\exp(1/D)$. We have found that $D = 3$ (implying base $\approx 1.4$) is a good choice for most problems.

Increases in either of these parameters increase the computational time. Sensitivity to these parameters, both in terms of predictive error and computational time, is empirically examined in Section \ref{sec:sensitivity}.


\section{Theoretical Properties of CAP}\label{sec:theory} 

In this section, we give the computational complexity for the CAP algorithm and conditions for consistency. Since CAP is similar to existing adaptive partitioning methods, we can leverage existing results to show consistency. 

\subsection{Computational Complexity}
Computational complexity describes the number of bit operations a computer must do to perform a routine, such as CAP. It is useful to determine small sample runtimes and how well routines will scale to larger problems. The computational complexity of the least squares estimator is unworkably high at $\bigO((p+1)^3 n^3)$ flops to solve a problem with $n$ observations in $p$ dimensions~\citep{MoAd89}. The worst case computational complexity of CAP is much lower, at $\bigO( p(p+1)^2 n \log(n) \log( \log(n)))$ flops when implemented as in Section \ref{sec:CAP}. The most demanding part of the CAP algorithm is the linear regression; each one has complexity $\bigO((p+1)^2n)$. For iteration $k$ of the algorithm, $L p k$ linear regressions are fit. This is done for $k = 1,\dots, K$, where $K$ is bounded by $D\log(n)$. Putting this together we obtain the above complexity.

\begin{figure}
\begin{center}
\includegraphics[width=6.5in, viewport = 10 260 595 535]{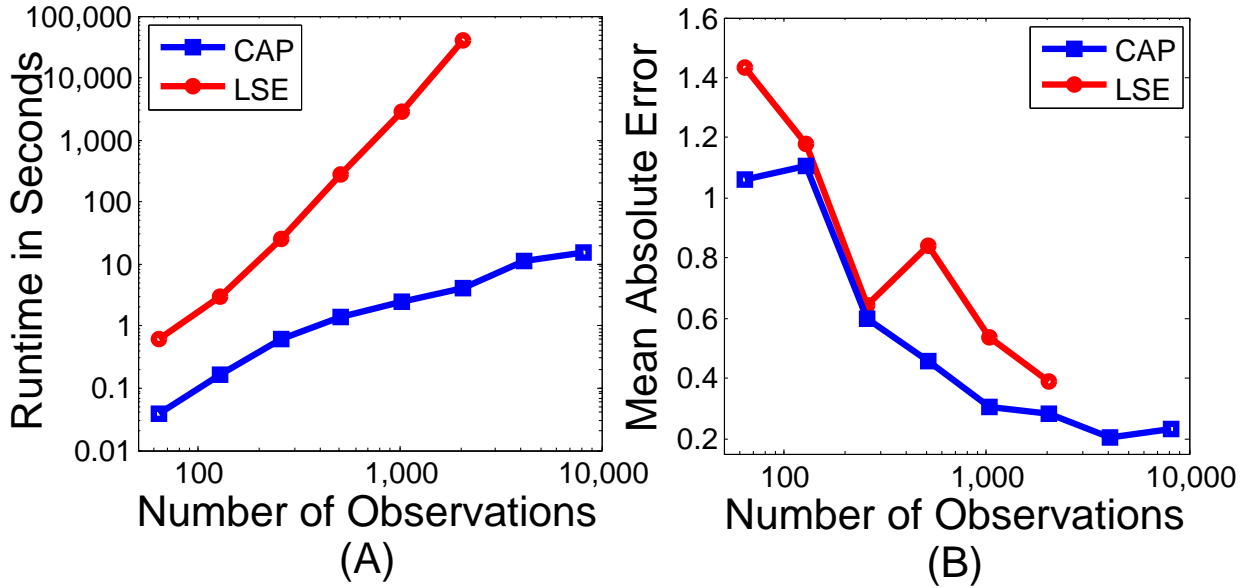}
\end{center}
\caption{(A) Number of observations $n$ (log scale) vs. runtime in minutes (log scale), and (B) number of observations $n$ (log scale) vs. mean absolute error (linear scale) for the least squares estimator (LSE), and CAP. Here $\x \in \R^5$ and $y = \left(x_1+.5x_2+x_3\right)^2 - x_4+.25x_5^2 + \epsilon,$
where $\epsilon \sim N(0,1)$. The covariates are drawn from a 5 dimensional standard Gaussian distribution, $N_5(0,I)$.}
\label{fig:complexity}
\end{figure}

To demonstrate how much these factors matter in practice, we empirically compare CAP, Fast CAP and LSE on a small problem, $y = \left(x_1+.5x_2+x_3\right)^2 - x_4+.25x_5^2 + \epsilon,$ where $\x \in \R^{5}$, $\x \sim N_{5}(0,I)$ and $\epsilon \sim N(0,1)$. The runtimes and mean absolute errors of each method are shown in Figure \ref{fig:complexity}. 


\subsection{Consistency}\label{sec:consistency}


We now show consistency for the CAP algorithm. Consistency is shown in a similar manner to consistency for other adaptive partitioning models, like CART~\citep{BrFrOl84}, treed linear models~\citep{ChHuLo94} and other variants~\citep{No96,GyKoKr02}. We take a two-step approach, first showing consistency for the mean function and first derivatives of a more traditional treed linear model based on CAP under the $\ell_{\infty}$ metric and then we use that to show consistency for the CAP estimator itself. 

Letting $M_n^*$ be the model for the CAP estimate after $n$ observations, define the discontinuous piecewise linear estimate based on $M_n^*$,
\begin{align}\notag
f^*_n(\x) & = \sum_{k=1}^{K_n} \left(\alpha_k + \beta_k^T \x\right) \1_{\{\x \in A_k\}},
\end{align}where $K_n$ is the partition size, $A_1,\dots,A_{K_n}$ are the covariate partitions and $(\alpha_k,\beta_k)_{k=1}^{K_n}$ are the hyperplanes associated with $M_n^*$. Likewise, let $\hat{f}_n(\x)$ be the CAP estimator based on $M_n^*$,
\begin{align}\notag
\hat{f}_n(\x) & = \max_{k \in \{ 1,\dots,K_n\}} \alpha_k + \beta_k^T \x.
\end{align}Each subset $A_k$ has an associated diameter, $d_{nk}$, where $$d_{nk} = \sup_{\x_1,\x_2 \in A_k} ||x_1 - x_2||_2.$$ Define the empirical covariate mean for subset $k$ as $\bar{x}_k = \frac{1}{|C_k|} \sum_{i \in C_k } \x_i.$ For $\x_i \in A_k,$ define
\begin{align}\notag
\Gamma_i & =   \left[\begin{array}{c}
    [1,\dots, 1] \\ 
     d_{nk}^{-1} \left( \x_i - \bar{\x}_k\right)  
\end{array}\right], & G_k & = \sum_{i \in C_k} \Gamma_i \Gamma_i^T.
\end{align}Note that $(\alpha_k, \beta_k) = G_k^{-1} \sum_{i \in C_k} \Gamma_i y_i$ whenever $G_k$ is nonsingular.



Let $\x_1,\dots,\x_n$ be i.i.d. random variables. We make the following assumptions:
\begin{enumerate}
	\item[{\bf A1.}] $\mathcal{X}$ is compact and $f_0$ is Lipschitz continuous and continuously differentiable on $\mathcal{X}$ with Lipschitz parameter $\zeta$.
	\item[{\bf A2.}] There is an $a > 0$ such that $\E\left[ e^{a | Y - f_0(x)| }\g \X = \x\right]$ is bounded on $\mathcal{X}$.
	\item[{\bf A3.}] Let $\lambda_k$ be the smallest eigenvalue of $|C_k|^{-1} G_k$ and $\lambda_n = \min_k \lambda_k$. Then $\lambda_n$ remains bounded away from 0 in probability as $n \rightarrow \infty$.
	\item[{\bf A4.}] The diameter of the partition $\max_k d_{nk}^{-1} \rightarrow 0$ in probability as $n \rightarrow \infty$.
\end{enumerate}

Assumptions {\bf A1.} and {\bf A2.} place regularity conditions on $f_0$ and the noise distribution, respectively. Assumption {\bf A3.} is a regularity condition on the covariate distribution to ensure the uniqueness of the linear estimates. Assumption {\bf A4.} is a condition that can be included in the algorithm and checked along with the subset cardinality, $|C_k|$. If $\mathcal{X}$ is given, it can be computed directly, otherwise it can be approximated using $\{\x_i \, : \, i\in C_k\}$. In some cases, such as when $f_0$ is strongly convex, {\bf A4.} will be satisfied without enforcement due to problem structure.

To show consistency of $\hat{f}_n$ under the $\ell_{\infty}$ metric, we first show consistency of $f^*_n$ and its derivatives under the $\ell_{\infty}$ metric in Theorem \ref{thm:fK}. This is very close to Theorem 1 of \citet{ChHuLo94} for treed linear models, although we need to modify it to allow partitions with an arbitrarily large number of faces.

\begin{thm}\label{thm:fK}
Suppose that assumptions {\bf A1.} through {\bf A4.} hold. Then, 
\begin{align}\notag
\max_{k =1,\dots,K_n} \sup_{\x \in A_k} \left| \alpha_k + \beta_k^T \x  - f_0(\x) \right| & \rightarrow 0, &\max_{k =1,\dots,K_n} \sup_{\x \in A_k} \left|\left| \beta_k  - \nabla f_0(\x) \right|\right|_{\infty} & \rightarrow 0
\end{align}in probability as $n\rightarrow \infty$.
\end{thm}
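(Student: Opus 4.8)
The plan is to reduce the theorem to a uniform signal--noise (bias--variance) analysis of the within-cell least squares fits, exploiting the diameter normalization built into $\Gamma_i$. Using the representation $G_k^{-1}\sum_{i\in C_k}\Gamma_i y_i$ together with $y_i=f_0(\x_i)+\epsilon_i$, I would split the normalized coefficient vector into a signal part $G_k^{-1}\sum_{i\in C_k}\Gamma_i f_0(\x_i)$ and a noise part $G_k^{-1}\sum_{i\in C_k}\Gamma_i\epsilon_i$. The target for the signal part is the first-order Taylor polynomial of $f_0$ at the cell mean $\bar{\x}_k$, whose normalized coefficients are $\theta_k^*=(f_0(\bar{\x}_k),\,d_{nk}\nabla f_0(\bar{\x}_k))$; matching the fit to $\theta_k^*$ simultaneously controls the affine value $\alpha_k+\beta_k^T\x$ and the gradient, since the unnormalized slope is recovered as $d_{nk}^{-1}$ times the lower block of the normalized coefficient.

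For the signal part I would write the Taylor remainder $r_i=f_0(\x_i)-\theta_k^{*T}\Gamma_i$. Assumption {\bf A1.} gives continuous differentiability on the compact $\mathcal{X}$, hence uniform continuity of $\nabla f_0$ with some modulus $\omega$, so that $\max_{i\in C_k}|r_i|\le d_{nk}\,\omega(d_{nk})$ uniformly in $k$. Assumption {\bf A3.} bounds $\|G_k^{-1}\|\le(|C_k|\lambda_n)^{-1}$, and since $\bar{\x}_k$ lies in the convex cell $A_k$ we have $\|\x_i-\bar{\x}_k\|\le d_{nk}$, so $\|\Gamma_i\|\le\sqrt2$. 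Combining these gives a normalized signal error of order $\lambda_n^{-1}d_{nk}\,\omega(d_{nk})$. The value error is then $\bigO(\lambda_n^{-1}d_{nk}\omega(d_{nk}))$, and, crucially, the diameter factor cancels for the gradient, leaving a gradient bias of order $\lambda_n^{-1}\omega(d_{nk})$; both vanish under {\bf A3.} and {\bf A4.}

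The noise part is the main obstacle. For a \emph{fixed} cell, {\bf A2.} makes each coordinate of $\sum_{i\in C_k}\Gamma_i\epsilon_i$ a sum of bounded-weight, mean-zero sub-exponential variables, so a Bernstein inequality gives $\||C_k|^{-1}\sum_{i\in C_k}\Gamma_i\epsilon_i\|=\bigO_{\p}(\sqrt{\log n/|C_k|})$. The difficulty is that $C_k$ is \emph{data-dependent}: the CAP partition is selected adaptively from $y$, so the cells cannot be fixed in advance and the Bernstein bound does not apply verbatim. I would resolve this exactly in the spirit of the modification of Theorem~1 of \citet{ChHuLo94} needed ``to allow partitions with an arbitrarily large number of faces'': every realizable cell is an intersection of a controlled number of halfspaces — axis-aligned ones from the dyadic splits and hyperplane-comparison ones from the refitting step — so the collection of cells the algorithm can ever produce has polynomial-in-$n$ cardinality (bounded VC complexity for fixed $p$ and $K_n\le D\log n$). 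A union bound over this polynomial family, together with the lower bound $|C_k|\ge n_{min}\to\infty$, upgrades the fixed-cell estimate to a uniform one, giving $\max_k\||C_k|^{-1}\sum_{i\in C_k}\Gamma_i\epsilon_i\|=o_{\p}(1)$.

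Finally I would assemble the pieces by taking the maximum over $k$ and the supremum over $\x\in A_k$ of the combined signal and noise bounds. The value error is controlled directly, whereas the gradient noise carries the extra factor $d_{nk}^{-1}$ and so reads $d_{nk}^{-1}\bigO_{\p}(\sqrt{\log n/|C_k|})$. The delicate point, and the reason {\bf A3.}, {\bf A4.} and the growth rate of $n_{min}$ must be imposed jointly, is that this gradient variance vanishes only when the cell diameters shrink slowly relative to the cell occupancy, i.e. $d_{nk}^2|C_k|/\log n\to\infty$: the minimum-cardinality rule $|C_k|\ge n/(D\log n)$ keeps the cells populated enough for this to hold, while {\bf A4.} simultaneously forces the bias to zero. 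This yields both claimed limits in probability.
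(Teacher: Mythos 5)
Your strategy coincides with the paper's rather than departing from it. The paper proves Theorem \ref{thm:fK} by reducing it to Theorem 1 of \citet{ChHuLo94}, whose proof is exactly the normalized-design bias/noise decomposition you describe, and the paper's only new ingredient is Lemma \ref{lem:12.27}, the modification of Lemma 12.27 of \citet{BrFrOl84} that plays the role of your ``union bound over realizable cells'' when the refitting step allows cells with up to $D\log(n)$ faces. So you have reconstructed, in expanded form, the argument the paper invokes by citation; your treatment of the bias (Taylor remainder of size $d_{nk}\,\omega(d_{nk})$, the eigenvalue bound from {\bf A3.}, and the cancellation of one $d_{nk}$ factor in the slope) is the standard part of that argument and is correct.

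There are, however, two concrete problems in your noise analysis. First, the family of realizable cells does not have polynomial-in-$n$ cardinality: a cell is an intersection of up to order $D\log(n)$ halfspaces, each halfspace can realize $\bigO(n^{p+1})$ distinct subsets of the data, so the family has size $n^{\bigO(\log n)}$ --- quasi-polynomial. Absorbing this blow-up is precisely the content of the paper's Lemma \ref{lem:12.27}: the union bound closes only because $|C_k| \geq n_{min} \approx n/(D\log(n))$ makes the per-cell term $\exp(-c\epsilon^2 n_{min})$ dominate $n^{\bigO(\log n)}$, which is what the hypothesis $k_n/\log(n) \rightarrow \infty$ encodes. Consequently your claimed uniform rate $\bigO_{\p}(\sqrt{\log(n)/|C_k|})$ is too optimistic (it should be of order $\log(n)/\sqrt{|C_k|}$), though the $o_{\p}(1)$ conclusion for the value error, which carries no $d_{nk}^{-1}$ factor, survives. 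Second, and this is a genuine gap: your gradient step needs $d_{nk}^{-1}$ times the uniform noise bound to vanish, and you assert that the minimum-cardinality rule guarantees $d_{nk}^2 |C_k| / \log(n) \rightarrow \infty$. It does not. Assumption {\bf A4.} bounds the diameters from \emph{above} only, and $|C_k| \geq n/(D\log(n))$ places no lower bound on $d_{nk}$: nothing in {\bf A1.}--{\bf A4.} forbids a cell holding $n/(D\log(n))$ points inside an arbitrarily small ball. To close this you would need an additional assumption linking occupancy to geometry (e.g., a covariate density bounded above, which forces $|C_k| \lesssim n\, d_{nk}^p$ and hence a diameter lower bound), or you must impose the balance condition directly --- which is in effect what the paper does by deferring this portion of the proof to the conditions and arguments of \citet{ChHuLo94}.
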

The CAP algorithm is similar to the SUPPORT algorithm of \citet{ChHuLo94}, except the refitting step of CAP allows partition subsets to be polyhedra with up to $D\log(n)$ faces. Theorem \ref{thm:fK} is analogous to Theorem 1 of \citet{ChHuLo94}; to prove our theorem, we modify parts of the proof in \citet{ChHuLo94} that rely on a fixed number of polyhedral faces. As such, we first need to modify Lemma 12.27 of \citet{BrFrOl84}.

\begin{lem}[Modification of Lemma 12.27 of \citet{BrFrOl84}]\label{lem:12.27}
Suppose that {\bf A2.} holds and that there exists a $k_n \rightarrow \infty$ where $k_n/ \log(n)\rightarrow \infty$. Then, for every compact set $B$ in $\mathcal{X}$ and every $\epsilon > 0 $ and $c > 0$,
\begin{align}\notag
\lim_{n \rightarrow \infty} n^c \, \p \left( \left| \frac{1}{|C_k|} \sum_{i \in C_k} y_i - \frac{1}{|C_k|} \sum_{i \in C_k} f_0(\x_i)\right| > \epsilon \mathrm{ \ for \  } k \mathrm{ \ such \ that \ } A_k \subset B, \  |C_k| \geq k_n D \log (n)\right).
\end{align}
\end{lem}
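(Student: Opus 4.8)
The plan is to recognize that $y_i - f_0(\x_i) = \epsilon_i$ is the mean-zero noise, so the event inside the probability asks that the empirical noise average over \emph{some} admissible cell exceed $\epsilon$; the (implicit) conclusion is that $n^c$ times this probability tends to $0$ for every $c$, i.e. the probability decays faster than any polynomial. The cells $A_k$ are data dependent, since they are determined by the fitted hyperplanes, which depend on the $y_i$, so the main device is to decouple this dependence by conditioning on $\X = (\x_1,\dots,\x_n)$ and then taking a union bound over the entire combinatorial class of index sets that could ever be realized as a cell. Given $\X$ this class is deterministic, and for each fixed member $C$ the summands $\{\epsilon_i\}_{i\in C}$ are independent with tails controlled by {\bf A2}, so a standard large-deviation estimate applies to each $C$ separately.

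For the per-cell bound I would use the exponential moment of {\bf A2}. Since $\E[e^{a|\epsilon_i|}\mid \X]$ is bounded uniformly over $\mathcal{X}\supseteq B$, a Chernoff/Bernstein argument yields, for a fixed index set $C$ with $|C|=m$, an estimate of the form $\p(\,|m^{-1}\sum_{i\in C}\epsilon_i| > \epsilon \mid \X) \le 2e^{-\gamma m}$, where $\gamma=\gamma(\epsilon)>0$ is uniform over $B$ because the sub-exponential parameters are. On the event of interest $m=|C_k|\ge k_n D\log(n)$, so each admissible cell contributes at most $2\exp(-\gamma k_n D\log n)=2n^{-\gamma D k_n}$.

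The heart of the modification, and the step I expect to be the main obstacle, is bounding the number $N_n$ of index sets that can arise as cells. In CART and in SUPPORT the cells are axis-aligned boxes with a \emph{fixed} number of faces, so $N_n$ is polynomial and $k_n\to\infty$ alone closes the argument. Here the refitting step produces polyhedra of the form $\{\x:\alpha_k+\beta_k^T\x\ge \alpha_j+\beta_j^T\x,\ \forall j\}$, i.e. intersections of up to $K_n-1<D\log(n)$ half-spaces. Conditioning on $\X$, a single half-space splits the $n$ fixed points in at most $\bigO(n^p)$ ways (Cover/Sauer), so taking products over the at most $D\log(n)$ faces gives $N_n\le (c_1 n^p)^{D\log n}=\exp\bigl(\bigO((\log n)^2)\bigr)$, uniformly in $\X$. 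It is precisely this super-polynomial count that forces the strengthened hypothesis $k_n/\log(n)\to\infty$ in place of merely $k_n\to\infty$.

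Combining the two pieces by a union bound over the at most $N_n$ admissible cells and integrating out $\X$ gives
\[
n^c\,\p(\cdot)\ \le\ 2\,n^c (c_1 n^p)^{D\log n} e^{-\gamma k_n D\log n} = 2\exp\left(D\log n\left[\tfrac{c}{D}+\log c_1+p\log n-\gamma k_n\right]\right).
\]
Because $k_n/\log(n)\to\infty$, the term $-\gamma k_n$ eventually dominates $p\log n$ as well as the lower-order $\log c_1$ and $c/D$ contributions, so the bracket tends to $-\infty$ and the whole expression to $0$ for every $c>0$. The only genuinely delicate point is the balance in this last display: one must check that the $-\gamma k_n$ gain per face outruns the $p\log n$ cost per face coming from the cell count, which is exactly what $k_n/\log(n)\to\infty$ guarantees.
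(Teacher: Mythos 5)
Your proposal is correct and follows essentially the same route as the paper's proof: a union bound over the combinatorial class of polyhedral cells with at most $D\log(n)$ faces (polynomially many dichotomies per face, raised to the $D\log(n)$ faces), a per-cell exponential tail bound from \textbf{A2}, and the observation that $k_n/\log(n)\rightarrow\infty$ makes the negative exponent dominate the $\bigO((\log n)^2)$ cost of the cell count. The paper states this tersely by citing the bound $2^{1+(D\log(n))(p+2)}\,n^{(D\log(n))(p+2)-\epsilon^2 k_n/2C}$ from the argument of \citet{BrFrOl84}; you have simply reconstructed the same argument in full, including the conditioning-on-$\X$ device and the Cover/Sauer counting that the paper leaves implicit.
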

\begin{proof}
To prove this Lemma, we only need to lift the restriction on the number of faces of the polyhedron $A_k$ from being bounded by a fixed $M$ to $D\log (n)$. First, we note that $|C_k| \geq n_{min}$ implies that $K_n \leq D \log(n).$ Following the proof in \citet{BrFrOl84}, we note that
\begin{align}\notag
\p & \left( \left| \frac{1}{|C_k|} \sum_{i \in C_k} y_i - \frac{1}{|C_k|} \sum_{i \in C_k} f_0(\x_i)\right| > \epsilon \mathrm{ \ for \  } k \mathrm{ \ such \ that \ } A_k \subset B, \  |C_k| \geq k_n D \log (n)\right)\\\notag
& \leq 2^{1 + (D\log(n))(p+2)} n^{(D\log(n))(p+2) - \epsilon^2 k_n / 2C},
\end{align} for a fixed constant $C$ depending on assumption 6. Since $k_n/\log(n) \rightarrow \infty$, the conclusion holds.
\end{proof}

With Lemma \ref{lem:12.27}, the proof of Theorem \ref{thm:fK} follows directly from the arguments of \citet{ChHuLo94}.

Using the results from Theorem \ref{thm:fK}, extension to consistency for $\hat{f}_n$ under the $\ell_{\infty}$ metric is fairly simple; this is given in Theorem \ref{thm:consistency}.

\begin{thm}\label{thm:consistency}
Suppose that assumptions {\bf A1.} through {\bf A4.} hold. Then, 
\begin{align}\notag
\sup_{\x \in \mathcal{X}} \left| \hat{f}_n(\x)  - f_0(\x) \right| & \rightarrow 0
\end{align}in probability as $n\rightarrow \infty$.
\end{thm}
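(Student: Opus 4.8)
The plan is to sandwich $\hat{f}_n$ between $f_0 - o_P(1)$ and $f_0 + o_P(1)$ uniformly over $\mathcal{X}$, using Theorem~\ref{thm:fK} to control each linear piece on the subset where it was fit and the convexity of $f_0$ (the standing assumption of the convex regression problem) to control each piece globally. Let
$$\delta_n = \max_{k} \sup_{\x \in A_k}\left| \alpha_k + \beta_k^T\x - f_0(\x)\right| \,\vee\, \max_{k} \sup_{\x \in A_k}\left\| \beta_k - \nabla f_0(\x)\right\|_{\infty},$$
which satisfies $\delta_n \to 0$ in probability by Theorem~\ref{thm:fK}. The whole argument then reduces to a deterministic bound $\sup_{\x}|\hat{f}_n(\x) - f_0(\x)| \leq C\,\delta_n$.

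The lower bound is immediate: for any $\x \in \mathcal{X}$ the partition covers $\mathcal{X}$, so $\x \in A_k$ for some $k$, and since $\hat{f}_n$ is a maximum of affine functions, $\hat{f}_n(\x) \geq \alpha_k + \beta_k^T\x \geq f_0(\x) - \delta_n$.

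The upper bound is the main obstacle, and it is the only place where convexity is needed. I would fix $\x \in \mathcal{X}$ and an arbitrary index $k$, and choose a reference point $\x^0_k \in A_k$ (for instance any observation $\x_i$ with $i \in C_k$, so that $\x^0_k \in A_k \subset \mathcal{X}$ automatically). Decomposing
$$\alpha_k + \beta_k^T\x = \left(\alpha_k + \beta_k^T\x^0_k\right) + \nabla f_0(\x^0_k)^T(\x - \x^0_k) + \left(\beta_k - \nabla f_0(\x^0_k)\right)^T(\x - \x^0_k),$$
I would bound the first bracket by $f_0(\x^0_k) + \delta_n$ using the value-consistency half of Theorem~\ref{thm:fK}, absorb the tangent term via the supporting-hyperplane inequality for convex $f_0$, namely $f_0(\x^0_k) + \nabla f_0(\x^0_k)^T(\x - \x^0_k) \leq f_0(\x)$ (Equation~\eqref{eq:convexity}), and control the residual by H\"older, $|(\beta_k - \nabla f_0(\x^0_k))^T(\x - \x^0_k)| \leq \delta_n\,\|\x - \x^0_k\|_1 \leq \delta_n R$, where $R$ bounds the $\ell_1$-diameter of the compact set $\mathcal{X}$ (the compactness in assumption~\textbf{A1}). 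Taking the maximum over $k$ gives $\hat{f}_n(\x) \leq f_0(\x) + \delta_n(1 + R)$, uniformly in $\x$.

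Combining the two bounds yields $\sup_{\x \in \mathcal{X}}|\hat{f}_n(\x) - f_0(\x)| \leq \delta_n(1 + R) \to 0$ in probability. The conceptual crux is that a hyperplane fit on $A_k$ could in principle overshoot $f_0$ badly when extrapolated far from $A_k$, but convexity forbids this: a hyperplane whose value and slope are approximately correct at an interior point of $A_k$ must lie below $f_0$ up to an error that is only linear in the gradient discrepancy $\delta_n$ and the bounded extrapolation distance. I expect the only points needing care to be the uniformity of $\delta_n$ over the growing number $K_n$ of subsets and the existence of a legitimate in-subset reference point, both of which are already delivered by the $\ell_\infty$ statement of Theorem~\ref{thm:fK}; note in particular that assumption~\textbf{A4} is not invoked directly here, since it has already been consumed in establishing Theorem~\ref{thm:fK}.
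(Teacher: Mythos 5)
Your proof is correct, and it takes a genuinely different --- and cleaner --- route than the paper's. The paper discretizes: it fixes a $\delta$-net over $\mathcal{X}$ containing at least one point in each $A_k$, reduces the supremum to a maximum over the net points (paying the Lipschitz constant $\zeta$), and then passes from the maximum-form estimator $\max_k \alpha_k + \beta_k^T\x$ to the sum-form treed estimator $\sum_k \left(\alpha_k+\beta_k^T\x\right)\1_{\{\x \in A_k\}}$, whose uniform consistency together with that of its gradients is Theorem \ref{thm:fK}; the thresholds $\epsilon/(\zeta d_{\mathcal{X}})$ in the two preliminary probability bounds are chosen to absorb the losses in that chain. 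The crucial step in the paper's chain --- that a hyperplane fit on $A_j$ cannot overshoot $f_0$ at a point of $A_k$, $j \neq k$, by more than roughly the gradient error times the domain diameter --- is exactly what you call the conceptual crux, but the paper leaves it implicit inside the passage from the max form to the sum form, which as written is the weakest link of its argument. You instead prove the overshoot bound explicitly and deterministically: anchor hyperplane $k$ at a reference point $\x_k^0 \in A_k$, apply the value half of Theorem \ref{thm:fK} there, invoke the supporting-hyperplane inequality (\ref{eq:convexity}) for the tangent term, and control the residual by H\"older over the compact domain, yielding $\sup_{\x \in \mathcal{X}}|\hat{f}_n(\x)-f_0(\x)| \le (1+R)\delta_n$ with no net and no discretization, the probabilistic conclusion following immediately from $\delta_n \rightarrow 0$ in probability. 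What your approach buys is rigor and transparency about precisely where convexity enters; the paper's net-based architecture buys essentially nothing extra here, since making its max-to-sum step airtight would require your supporting-hyperplane computation anyway.
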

\begin{proof}Fix $\epsilon > 0$; let $d_{\mathcal{X}}$ be the diameter of $\mathcal{X}$. Choose $n$ such that
\begin{align}\notag
\p \left\{ \max_{k =1,\dots,K_n} \sup_{\x \in A_k} \left| \alpha_k + \beta_k^T \x  - f_0(\x) \right|  > \frac{\epsilon}{\zeta d_{\mathcal{X}}} \right\}& < \epsilon/2, \\\notag
\p \left\{ \max_{k =1,\dots,K_n} \sup_{\x \in A_k} \left|\left| \beta_k  - \nabla f_0(\x) \right|\right|_{\infty} >  \frac{\epsilon}{\zeta d_{\mathcal{X}}} \right\}& < \epsilon/2
\end{align}
Fix a $\delta$ net over $\mathcal{X}$ such that at least one point of the net sits in $A_k$ for each $k = 1, \dots, K$. Let $n_{\delta}$ be the number of points in the net and let $\x_{i}^{\delta}$ be a point. Then,
\begin{align}\notag
\p\left\{ \sup_{x\in \mathcal{X}} \left| \hat{f}_n(\x) - f_0(\x) \right| > \epsilon \right\} & = \p\left\{ \sup_{x\in \mathcal{X}} \left| \max_{k=1,\dots,K_n}\alpha_k + \beta_k^T \x - f_0(\x) \right| > \epsilon\right\},\\\notag
& \leq  \p\left\{ \max_{i = 1,\dots, n_{\delta}} \left| \max_{k=1,\dots,K_n}\alpha_k + \beta_k^T \x_i^{\delta} - f_0(\x_i^{\delta}) \right| > \frac{\epsilon}{\zeta}\right\},\\\notag
& \leq \p\left\{ \max_{i = 1,\dots, n_{\delta}} \left| \sum_{k=1}^{K_n} \left(\alpha_k + \beta_k^T \x_i^{\delta}\right)\1_{\{x_i^{\delta} \in A_k\}} - f_0(\x_i^{\delta}) \right| > \frac{\epsilon}{\zeta d_{\mathcal{X}}}\right\},\\\notag
& < \epsilon.\end{align}\end{proof}

\section{Model Selection and Modifications}\label{sec:implementation}

The terminal model produced by the CAP algorithm often overfits the data and is computationally more intensive than necessary. In this section, we derive a generalized cross-validation method to select the best model from all of those produced by CAP, $M_1,\dots,M_K$. We then propose an approximate algorithm, Fast CAP, that requires substantially less computation than the original algorithm.




\subsection{Generalized Cross-Validation for CAP}\label{sec:gcv}
Cross-validation is a method to assess the predictive performance of statistical models and is routinely used to choose tunable parameters. In this case, we would like to choose the cardinality of the partition, $K$. As a fast approximation to leave-one-out cross-validation, we use generalized cross-validation~\citep{GoHeWa79,Fr91}. In a linear regression setting,
\begin{align}\label{eq:linearGCV}
\frac{1}{n} \sum_{i=1}^n \left(y_i - \hat{f}_{-i}(\x_i)\right)^2 & = \frac{1}{n} \sum_{i=1}^n \left(\frac{y_i - \hat{f}_n(\x_i)}{1 - H_{ii}}\right)^2
 \approx \frac{1}{n}\sum_{i=1}^n \left(\frac{y_i - \hat{f}_n(\x_i)}{1 - \frac{\nu}{n}}\right)^2,
\end{align}where $H_{ii}$ is the $i^{th}$ diagonal element of the hat matrix, $\X (\X^T \X)^{-1} \X^T$, $\hat{f}_{-i}$ is the estimator conditioned on all of the data minus element $i$, and $\nu$ is the degrees of freedom.

A given model $M_K$ is generated by a collection of linear models. A similar type approximation to leave-one-out cross-validation can be used to select the model size. The model $M_K$ is defined by $C_1,\dots,C_K$, the partition, and the hyperplanes $(\alpha_k,\beta_k)_{k=1}^K$, which were generated by the partition. Let $(\alpha_k^{(-i)},\beta_k^{(-i)})_{k=1}^K$ be the collection of hyperplanes generated when observation $i$ is removed; notice that if $i \in C_k$, only $(\alpha_k,\beta_k)$ changes. Let $\hat{f}_{-iK}$ be the estimator for model $M_K$ with observation $i$ removed. Using the derivation in Equation (\ref{eq:linearGCV}),
\begin{align}\notag
\frac{1}{n} \sum_{i=1}^n \left(y_i - \hat{f}_{-iK}(\x_i)\right)^2 &  = \frac{1}{n} \sum_{i=1}^n \left(y_i - \max_{k \in \{1,\dots,K\}} \alpha_k^{(-i)} + {\beta_k^{(-i)}}^T \x_i \right)^2,\\\notag
& = \frac{1}{n} \sum_{i=1}^n \left(\frac{y_i - \alpha_{k(i)} - \beta_{k(i)}^T \x_i}{1 - H^{k(i)}_{ii}\1_{\{i \in C_{k(i)}\}}}\right)^2\\\label{eq:gcv}
& \approx \frac{1}{n} \sum_{i=1}^n \left(\frac{y_i - \alpha_{k(i)} - \beta_{k(i)}^T \x_i}{1 - \frac{p+1}{|C_{k(i)}|}\1_{\{i \in C_{k(i)}\}}}\right)^2,
\end{align}where, in a slight abuse of notation, $H_{ii}^k$ is the diagonal entry of the hat matrix for subset $k$ corresponding to element $i$, and $$k(i) = \arg \max_{k \in \{1,\dots,K\}} \frac{\alpha_k + \beta_k^T \x_i }{1 - \frac{p+1}{|C_k|} \1_{\{i \in C_k\}}}.$$To select $K$, we find the $K$ that minimizes the right hand side of Equation (\ref{eq:gcv}). Although more computationally intensive than traditional generalized cross-validation, the computational complexity for CAP generalized cross-validation is similar to that of the CAP split selection step.

\subsection{Fast CAP}
The CAP algorithm offers two main computational bottlenecks. First, it searches over all cardinal directions, and only cardinal directions, to produce candidate models. Second, it keeps generating models until no subsets can be split without one having less than the minimum number of observations. In most cases, the optimal number of components is much lower than the terminal number of components.

To alleviate the first problem, we suggest using $P^{\prime}$ random projections as a basis for search. Using ideas similar to compressive sensing, each projection $\mathbf{g}_{j} \sim N_{p}(0,I)$ for $j = 1,\dots, P^{\prime}$. Then we search along the direction $\mathbf{g}_{j}^T \x$ rather than $x_j$. When we expect the true function to live in a lower dimensional space, as is the case with superfluous covariates, we can set $P^{\prime} < p$.

We solve the second problem by modifying the stopping rule. Instead of fulling growing the tree until each subset has less than $2n/(2\log(n))$ observations, we use generalized cross-validation. We grow the tree until the generalized cross-validation value has increased in two consecutive iterations or each subset has less than $2n/(2\log(n))$ observations. As the generalized cross-validation error is usually concave in $K$, this heuristic often offers a good fit at a fraction of the computational expense of the full CAP algorithm.

The Fast CAP algorithm has the potential to substantially reduce the $\log(n)\log( \log(n))$ factor by halting the model generation long before $K$ reaches $D\log(n)$. Since every feasible partition is searched for splitting, the computational complexity grows as $k$ gets larger.

The Fast CAP algorithm is summarized as follows.

\noindent \paragraph{Fast Convex Adaptive Partitioning (Fast CAP)}
\begin{enumerate}
	\item {\bf Initialize.} As in CAP.
	\item {\bf Split.}
	\begin{enumerate}
		\item[a.] {\it Generate candidate splits.} Generate candidate model $\hat{M}_{kj\ell}$ by 1) fixing a subset $k$, 2) generating a random direction $j$ with $\mathbf{g}_j \sim N_p(0,I)$, and 3) dyadically dividing the data as follows:
		\begin{itemize}
			\item set $x_{min}^{kj} = \min \{ \mathbf{g}_j^T\x_i : i \in C_k\}$, $x_{max}^{kj} = \max \{ \mathbf{g}_j^T\x_i : i \in C_k\}$ and $b_{kj\ell} = a_{\ell} x_{min}^{kj} + (1- a_{\ell})x_{max}^{kj}$
			\item set
		\begin{align}\notag
\hat{C}_k & = \{i : i \, \in C_k, \, \mathbf{g}_j^Tx_{i} \leq b_{kj\ell}\}, & \hat{C}_{K+1} & = \{i : i \, \in C_k, \, \mathbf{g}_j^T \x_{i} > b_{kj\ell}\},\\\notag
\hat{A}_k & = \{\x : \x \, \in A_k, \, \mathbf{g}_j^T \x \leq b_{kj\ell}\}, & \hat{A}_{K+1} & = \{\x : \x \, \in A_k, \, \mathbf{g}_j^T \x > b_{kj\ell}\}.
\end{align}
		\end{itemize}
	 	Then new hyperplanes are fit to each of the new subsets. This is done for $L$ knots, $P^{\prime}$ dimensions and $K$ subsets.
		\item[b.] {\it Select split.} As in CAP.
	\end{enumerate}
	\item {\bf Refit.} As in CAP.
	\item {\bf Stopping conditions.} Let $GCV(M_K)$ be the generalized cross-validation error for model $M_K$. Stop if $GCV(M_{K}) > GCV(M_{K-1})$ and $GCV(M_{K-1}) > GCV(M_{K-2})$. Then select final model as in CAP.
\end{enumerate}


\section{Applications}\label{sec:numbers}

In this section, we empirically analyze the performance of CAP. There are no benchmark problems for multivariate convex regression, so we analyze the predictive performance, runtime, sensitivity to tunable parameters and rates of convergence on a set of synthetic problems. We then apply CAP to value function approximation for pricing American basket options.



\subsection{Synthetic Regression Problems}

We apply CAP to two synthetic regression problems to demonstrate predictive performance and analyze sensitivity to tunable parameters. The first problem has a non-additive structure, high levels of covariate interaction and moderate noise, while the second has a simple univariate structure embedded in a higher dimensional space and low noise. Low noise or noise free problems often occur when a highly complicated convex function needs to be approximated by a simpler one~\citep{MaBo09}.

\paragraph{Problem 1}
Here $\x \in \R^5$.  Set $$y = \left(x_1+.5x_2+x_3\right)^2 - x_4+.25x_5^2 + \epsilon,$$
where $\epsilon \sim N(0,1)$. The covariates are drawn from a 5 dimensional standard Gaussian distribution, $N_5(0,I)$.

\paragraph{Problem 2}
Here $\x \in \R^{10}$.  Set $$y = \exp\left( \x^T \mathbf{p}\right)+ \epsilon,$$
where $\mathbf{p}$ was randomly drawn from a Dirichlet(1,$\dots$,1) distribution, $$\mathbf{p} = (0.0680, 0.0160, 0.1707,0.1513,0.1790,0.2097, 0.0548,0.0337,0.0377,0.0791)^T.$$  We set $\epsilon \sim N(0,0.1^2)$. The covariates are drawn from a 10 dimensional standard Gaussian distribution, $N_{10}(0,I)$.


\subsubsection{Predictive Performance and Runtimes}

We compared the performance of CAP and Fast CAP to other regression methods on Problems 1 and 2. The only other convex regression method included was least squares regression (LSE); it was implemented with the {\tt cvx} convex optimization solver. The general methods included Gaussian processes~\citep{RaWi06}, a widely implemented Bayesian nonparametric method, and two adaptive methods:  tree regression with constant values in the leaves and Multivariate Adaptive Regression Splines (MARS)~\citep{Fr91}. Tree regression was run through the Matlab function {\tt classregtree}. MARS was run through the Matlab package {\tt ARESLab}. Gaussian processes were run with the Matlab package {\tt gpml}.



Parameters for CAP and Fast CAP were set as follows. The log scale parameter set as $D = 3$ and the number of knots was set as $L = 10$ for both. In Fast CAP, the number of random search directions was set to be $p$. 

All methods were given a maximum runtime of 90 minutes, after which the results were discarded. Methods were run on 10 random training sets and tested on the same testing set. Average runtimes and predictive performance are given in Table \ref{tab:synthetic}.

\begin{table}
\footnotesize
\begin{center}
\begin{tabular}{l | r@{.}l | r@{.}l | r@{.}l | r@{.}l | r@{.}l | r@{.}l  | r@{.}l }
\multicolumn{15}{c}{Mean Squared Error}\\
\multicolumn{15}{c}{Problem 1}\\
Method & \multicolumn{2}{c |}{$n = 100$} & \multicolumn{2}{c |}{$n = 200$} & \multicolumn{2}{c |}{$n = 500$} & \multicolumn{2}{c |}{$n = 1,000$} & \multicolumn{2}{c |}{$n = 2,000$} & \multicolumn{2}{c |}{$n = 5,000$} & \multicolumn{2}{c }{$n = 10,000$}\\\hline
 CAP & {\bf1} & {\bf5884} & {\bf0} & {\bf6827} & {\bf0} & {\bf2740} & 0 & 1644 &{\bf0} & {\bf0927} & {\bf0} & {\bf0629} & {\bf0} & {\bf0450}\\
  Fast CAP & 1 & 8661 & 0 & 7471 & 0 & 3197 & {\bf0} & {\bf1526} & 0 & 1356 & 0 & 0724 & 0 & 0566\\
 LSE & 15 & 8340 & 9 & 5970 & 18 & 0701 & 9,862 & 4602 & \multicolumn{2}{c|}{ -- } & \multicolumn{2}{c|}{ -- } & \multicolumn{2}{c}{ -- } \\
 Tree & 12 & 2794 & 9 & 8356 & 6 & 7606 & 5 & 3478 & 4 & 1230 & 2 & 9173 & 2 & 3152 \\
 GP & 8 & 5056 & 13 & 5495 & 6 & 8472 & 3 & 7610 & 2 & 2928 & 1 & 2058 & \multicolumn{2}{c}{ -- }  \\
 MARS & 8 & 3517 & 8 & 0031 & 6 & 8813 & 6 & 2618 & 5 & 9809 & 5 & 8558 & 5 & 8234 \\
\hline
 \multicolumn{15}{c}{Problem 2}\\
Method & \multicolumn{2}{c |}{$n = 100$} & \multicolumn{2}{c |}{$n = 200$} & \multicolumn{2}{c |}{$n = 500$} & \multicolumn{2}{c |}{$n = 1,000$} & \multicolumn{2}{c |}{$n = 2,000$} & \multicolumn{2}{c |}{$n = 5,000$} & \multicolumn{2}{c }{$n = 10,000$}\\\hline
CAP & 0 & 0159 & 0 & 0138 & 0 & 0110 & {\bf0} & {\bf0018} & 0 & 0012 & {\bf0} & {\bf0007} & {\bf0} & {\bf0003} \\
Fast CAP & 0 & 0159 & 0 & 0138 & 0 & 0090 & {\bf0} & {\bf0018} & {\bf0} & {\bf0011} & {\bf0} & {\bf0007} & {\bf0} & {\bf0003} \\
 LSE & 0 & 6286 & 0 & 2935 & 31 & 2426 & \multicolumn{2}{c|}{ -- } & \multicolumn{2}{c|}{ -- } & \multicolumn{2}{c|}{ -- } & \multicolumn{2}{c}{ -- } \\
 Tree & 0 & 1372 & 0 & 1129 & 0 & 0928 & 0 & 0797 & 0 & 0670 & 0 & 0552 & 0 & 0495 \\
 GP & {\bf 0} & {\bf 0109} & {\bf0} & {\bf0063} & {\bf0} & {\bf0039} & 0 & 0027 & 0 & 0047 & 0 & 0076 & \multicolumn{2}{c}{ -- } \\
 MARS & 0 & 0205 & 0 & 0140 & 0 & 0120 & 0 & 0110 & 0 & 0105 & 0 & 0102 & 0 & 0100 \\
\hline
\multicolumn{15}{c}{}\\
\multicolumn{15}{c}{Run Time}\\
\multicolumn{15}{c}{Problem 1}\\
Method & \multicolumn{2}{c |}{$n = 100$} & \multicolumn{2}{c |}{$n = 200$} & \multicolumn{2}{c |}{$n = 500$} & \multicolumn{2}{c |}{$n = 1,000$} & \multicolumn{2}{c |}{$n = 2,000$} & \multicolumn{2}{c |}{$n = 5,000$} & \multicolumn{2}{c }{$n = 10,000$}\\\hline
 CAP & 0 & 15 sec & 0 & 24 sec & 0 & 78 sec & 1 & 34 sec & 2 & 18 sec & 4 & 33 sec & 9 & 31 sec\\
 Fast CAP & 0 & 04 sec & 0 & 07 sec & 0 & 15 sec & 0 & 30 sec & 0 & 57 sec & 1 & 14 sec & 2 & 06 sec\\
 LSE & 1 & 56 sec & 10 & 17 sec & 226 & 20 sec & 43 & 37 min & \multicolumn{2}{c|}{ -- } & \multicolumn{2}{c|}{ -- } & \multicolumn{2}{c}{ -- } \\
 Tree & 0 & 06 sec & 0 & 02 sec & 0 & 04 sec & 0 & 09 sec & 0 & 19 sec & 0 & 49 sec & 1 & 15 sec \\
 GP & 0 & 22 sec & 0 & 35 sec & 1 & 35 sec & 5 & 07 sec & 22 & 03 sec & 248 & 72 sec & \multicolumn{2}{c}{ -- } \\
 MARS & 0 & 22 sec & 0 & 34 sec & 0 & 76 sec & 1 & 81 sec & 3 & 95 sec & 16 & 65 sec & 56 & 19 sec \\
 \hline
\multicolumn{15}{c}{Problem 2}\\
Method & \multicolumn{2}{c |}{$n = 100$} & \multicolumn{2}{c |}{$n = 200$} & \multicolumn{2}{c |}{$n = 500$} & \multicolumn{2}{c |}{$n = 1,000$} & \multicolumn{2}{c |}{$n = 2,000$} & \multicolumn{2}{c |}{$n = 5,000$} & \multicolumn{2}{c }{$n = 10,000$}\\\hline
CAP & 0 & 05 sec & 0 & 25 sec & 2 & 15 sec & 6 & 35 sec & 10 & 06 sec & 21 & 06 sec & 46 & 50 sec \\
Fast CAP & 0 & 02 sec & 0 & 03 sec & 0 & 08 sec & 0 & 13 sec & 0 & 25 sec & 0 & 89 sec & 2 & 03 sec \\
 LSE & 1 & 86 sec & 15 & 13 sec & 339 & 16 sec & \multicolumn{2}{c|}{ -- } & \multicolumn{2}{c|}{ -- } & \multicolumn{2}{c|}{ -- } & \multicolumn{2}{c}{ -- } \\
 Tree & 0 & 02 sec & 0 & 03 sec & 0 & 07 sec & 0 & 14 sec & 0 & 27 sec & 0 & 71 sec & 1 & 53 sec \\
 GP & 0 & 15 sec & 0 & 34 sec & 1 & 46 sec & 4 & 93 sec & 23 & 13 sec & 264 & 77 sec & \multicolumn{2}{c}{ -- } \\
 MARS & 0 & 72 sec & 0 & 48 sec & 1 & 38 sec & 3 & 43 sec & 8 & 01 sec & 33 & 29 sec & 98 & 75 sec \\
\hline
\end{tabular}
\end{center}
\normalsize
\caption{CAP, Fast CAP, Least Squares Estimator (LSE), Gaussian Processes (GP), MARS and tree regression were run on Problems 1 and 2. Errors are in distance to the true mean function. The lowest error values are bolded.}\label{tab:synthetic}
\end{table}

Unsurprisingly, the non-convex regression methods did poorly compared to CAP and Fast CAP, particularly in the higher noise setting. Gaussian processes offered the best performance of that group, but their computational complexity scales like $\bigO(n^3)$; this computational times of more than 90 minutes for $n = 10,000$. More surprisingly, however, the LSE did extremely poorly. This can be attributed to overfitting, particularly in the boundary regions; this phenomenon can be seen in Figure \ref{fig:CAPvLSE} as well. While the natural response to overfitting is to apply a regularization penalty to the hyperplane parameters, implementation in this setting is not straightforward. We have tried implementing $\ell_2$ penalties on the hyperplane coefficients, but tuning the parameters quickly became computationally infeasible due to runtime issues with the LSE.




Although CAP and Fast CAP had similar predictive performance, their runtimes often differed by an order of magnitude with the largest differences on the biggest problem sizes. Based on this performance, we would suggest using Fast CAP on larger problems rather than the full CAP algorithm.



\subsubsection{CAP and Treed Linear Models}\label{sec:linearModels}


Treed linear models are a popular method for regression and classification. They can be easily modified to produce a convex regression estimator by taking the maximum over all of the linear models. CAP differs from existing treed linear models in how the partition is refined. First, subset splits are selected based on global reduction of error. Second, the partition is refit after a split is made. To investigate the contributions of each step, we compare to treed linear models generated by: 1) local error reduction as an objective for split selection and no refitting, 2) global error reduction as an objective function for split selection and no refitting, and 3) local error reduction as an objective for split selection along with refitting. All estimators based on treed linear models are generated by taking the maximum over the set of linear models in the leaves.

We wanted to determine which properties led to a low variance estimator with low predictive error. By low variance, we mean that changes in the training set do not lead to large changes in predictive error. To do this, we compared the performance of these methods on Problems 1 and 2 over 10 different training sets and a single testing set. All treed linear model parameters were the same as those for CAP. We viewed a model with local subset split selection and no refitting as a baseline. We compared both the average squared predictive error and the variance of that error between training sets. Percentages of average error and variance reduction are displayed in Table \ref{tab:treedLinearModel}. Average predictive error is displayed in Figure \ref{fig:treedLinearModel}.

\begin{table}
\footnotesize
\begin{center}
\begin{tabular}{l | r@{.}l | r@{.}l | r@{.}l | r@{.}l | r@{.}l | r@{.}l  | r@{.}l }
\multicolumn{15}{c}{Percentage Reduction in Mean Squared Error}\\
\multicolumn{15}{c}{Problem 1}\\
Method & \multicolumn{2}{c |}{$n = 100$} & \multicolumn{2}{c |}{$n = 200$} & \multicolumn{2}{c |}{$n = 500$} & \multicolumn{2}{c |}{$n = 1,000$} & \multicolumn{2}{c |}{$n = 2,000$} & \multicolumn{2}{c |}{$n = 5,000$} & \multicolumn{2}{c }{$n = 10,000$}\\\hline
Refitting & 48 & 65\% & \ 58 & 95\% & 32 & 62\% & \ \ \ 61 & 76\% & \ \ \ 73 & 04\% & \ \ \ 74 & 77\% & \ \ \ \ 70 & 01 \% \\
Global Selection & 24 & 67\% & 34 & 85\% & 21 & 32\% & 23 & 46\% & 29 & 40\% & 30 & 48\% & 19 & 23\% \\
CAP & 68 & 25\% & 69 & 81\% & 74 & 74\% & 76 & 97\% & 80 & 18\% & 81 & 40\% & 81 & 04\% \\
\hline
\multicolumn{15}{c}{Problem 2}\\
Method & \multicolumn{2}{c |}{$n = 100$} & \multicolumn{2}{c |}{$n = 200$} & \multicolumn{2}{c |}{$n = 500$} & \multicolumn{2}{c |}{$n = 1,000$} & \multicolumn{2}{c |}{$n = 2,000$} & \multicolumn{2}{c |}{$n = 5,000$} & \multicolumn{2}{c }{$n = 10,000$}\\\hline
Refitting & 0 & 0\% & 0 & 0\% & -17 & 73\% & 71 & 48\% & 78 & 36\% & 79 & 67\% & 77 & 05\% \\
Global Selection & 0 & 0\% & 0 & 0\% & -4 & 36\% & 17 & 69\% & 15 & 22\% & 25 & 04\% & 9 & 74\% \\
CAP & 0 & 0\% & 0 & 0\% & -17 & 10\% & 71 & 70\% & 75 & 60\% & 81 & 66\% & 86 & 21\% \\
\hline
\multicolumn{15}{c}{}\\
\multicolumn{15}{c}{Percentage Reduction in Variance of Mean Squared Error}\\
\multicolumn{15}{c}{Problem 1}\\
Method & \multicolumn{2}{c |}{$n = 100$} & \multicolumn{2}{c |}{$n = 200$} & \multicolumn{2}{c |}{$n = 500$} & \multicolumn{2}{c |}{$n = 1,000$} & \multicolumn{2}{c |}{$n = 2,000$} & \multicolumn{2}{c |}{$n = 5,000$} & \multicolumn{2}{c }{$n = 10,000$}\\\hline
Refitting & 19 & 16\% & 65 & 00\% & -243 & 33\% & -4 & 03\% & -163 & 40\% & 64 & 86\% & -18 & 88\% \\
Global Selection & 38 & 41\% & 68 & 78\% & -17 & 84\% & 61 & 34\% & 24 & 51\% & 91 & 44\% & 75 & 97\%\\
CAP & 96 & 89\% & 92 & 72\% & 68 & 74\% & 97 & 05\% & 74 & 85\% & 95 & 29\% & 63 & 17\%\\
\hline
\multicolumn{15}{c}{Problem 2}\\
Method & \multicolumn{2}{c |}{$n = 100$} & \multicolumn{2}{c |}{$n = 200$} & \multicolumn{2}{c |}{$n = 500$} & \multicolumn{2}{c |}{$n = 1,000$} & \multicolumn{2}{c |}{$n = 2,000$} & \multicolumn{2}{c |}{$n = 5,000$} & \multicolumn{2}{c }{$n = 10,000$}\\\hline
Refitting & 0 & 0\% & 0 & 0\% & -61 & 34\% & 44 & 75\% & 94 & 16\% & 73 & 93\% & 75 & 42\% \\
Global Selection & 0 & 0\% & 0 & 0\% & -19 & 84\% & -223 & 58\% & -209 & 92\% & -8 & 29\% & -7 & 17\% \\
CAP & 0 & 0\% & 0 & 0\% & -76 & 78\% & 52 & 44\% & 89 & 30\% & 30 & 18\% & 15 & 16\% \\
\hline
\end{tabular}
\end{center}
\caption{Percentage reduction in mean squared error and variance of mean squared error compared to treed linear model with local split selection and no refitting on Problems 1 and 2 over 10 training sets. This model was compared to 1) a treed linear model with partition refitting but local split selection, 2) a treed linear model with global split selection but no partition refitting, and 3) CAP.}\label{tab:treedLinearModel}
\end{table}

\begin{figure}
\begin{center}
\includegraphics[width=6.5in]{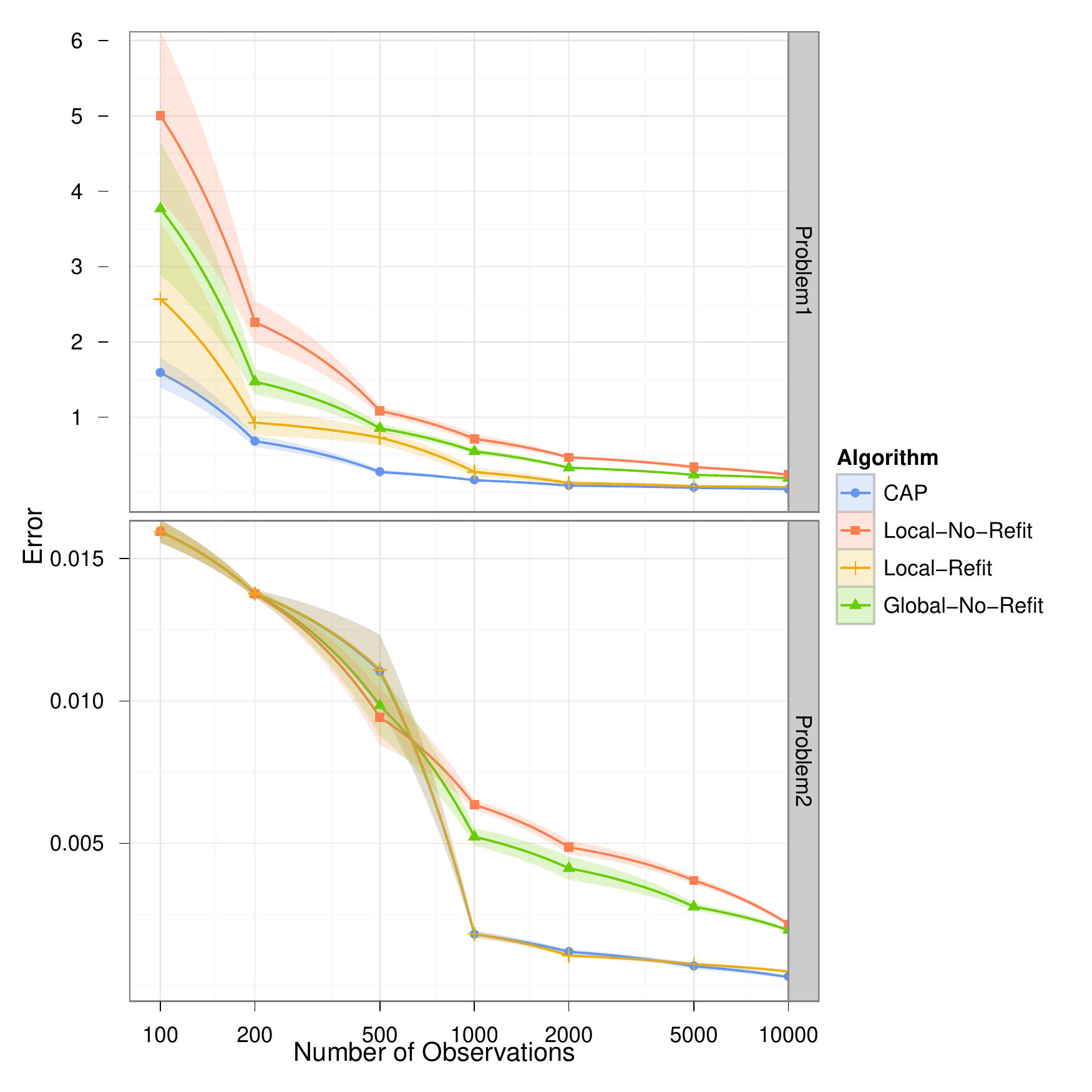}
\end{center}
\caption{Number of observations (log scale) vs. mean squared error (linear scale) for CAP and treed linear models with 1) local split selection, with no refitting, 2) local split selection, with refitting, and 3) global split selection, with no refitting. Mean error plus/minus one standard deviation are shown for data taken from 10 training sets.}
\label{fig:treedLinearModel}
\end{figure}

Table \ref{tab:treedLinearModel} shows that global split selection and refitting are both beneficial, but in different ways. Refitting dramatically reduces predictive error, but can variance to the estimator in noisy settings. Global split selection modestly reduces predictive error but can reduce variance in noisy settings, like Problem 1. The combination of the two produces CAP, which has both low variance and high predictive accuracy.

\subsubsection{Sensitivity to Tunable Parameters}\label{sec:sensitivity}


In this subsection, we empirically examine the effects of the two tunable parameters, the log factor, $D$, and the number of knots, $L$. The log factor controls the minimal number of elements in each subset by setting $|C_k| \geq n/(D\log(n))$, and hence it controls the number of subsets, $K$, at least for large enough $n$. Increasing $D$ allows the potential accuracy of the estimator to increase, but at the cost of greater computational time due to the increase in possible values for $K$ and the larger number of possibly admissible sets generated in the splitting step of CAP.

\begin{figure}
\begin{center}
\includegraphics[width=6.5in, viewport = 20 260 570 535]{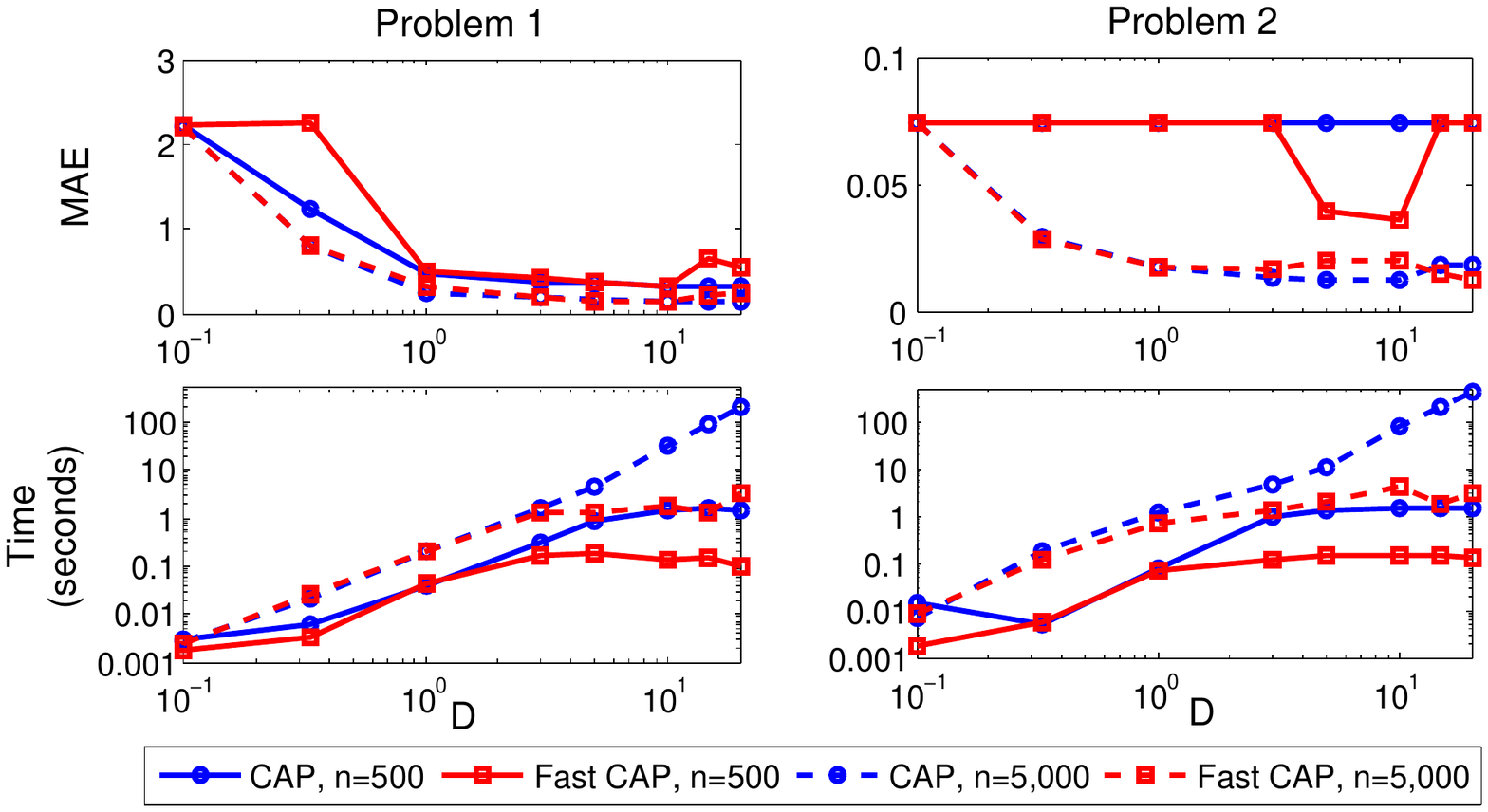}
\end{center}
\caption{(Top) Log factor $D$ (log scale) vs. mean absolute error (linear scale) for CAP and Fast CAP.  (Bottom) Log factor $D$ (log scale) vs. runtime in seconds (log scale). Both methods were run on Problem 1 (left) and Problem 2 (right) with $n = 500$ and $n = 5,000$.}
\label{fig:Dcompare}
\end{figure}

We compared values for $D$ ranging from $0.1$ to $20$ on Problems 1 and 2 with sample sizes of $n = 500$ and $n = 5,000$. Results are displayed in Figure \ref{fig:Dcompare}. Note that error may not be strictly decreasing with $D$ because different subsets are proposed under each value. Additionally, Fast CAP is a randomized algorithm so variance in error rate and runtime is to be expected. 

Empirically, once $D \geq 1$, there was little substantive error reduction in the models, but the runtime increased as $\bigO(D^2)$ for the full CAP algorithm. Since $D$ controls the maximum partition size, $K_n = D\log(n)$, and a linear regression is fit $K\log(K)$ times, the expected increase in the runtime should only be $\bigO(D\log(D))$. We believe that the extra empirical growth comes from an increased number of feasible candidate splits. In the Fast CAP algorithm, which terminates after generalized cross-validation gains cease to be made, we see runtimes leveling off with higher values of $D$. Based on these results, we believe that setting $D=3$ offers a good balance between fit and computational expense.

\begin{figure}
\begin{center}
\includegraphics[width=6.5in, viewport = 20 260 570 535]{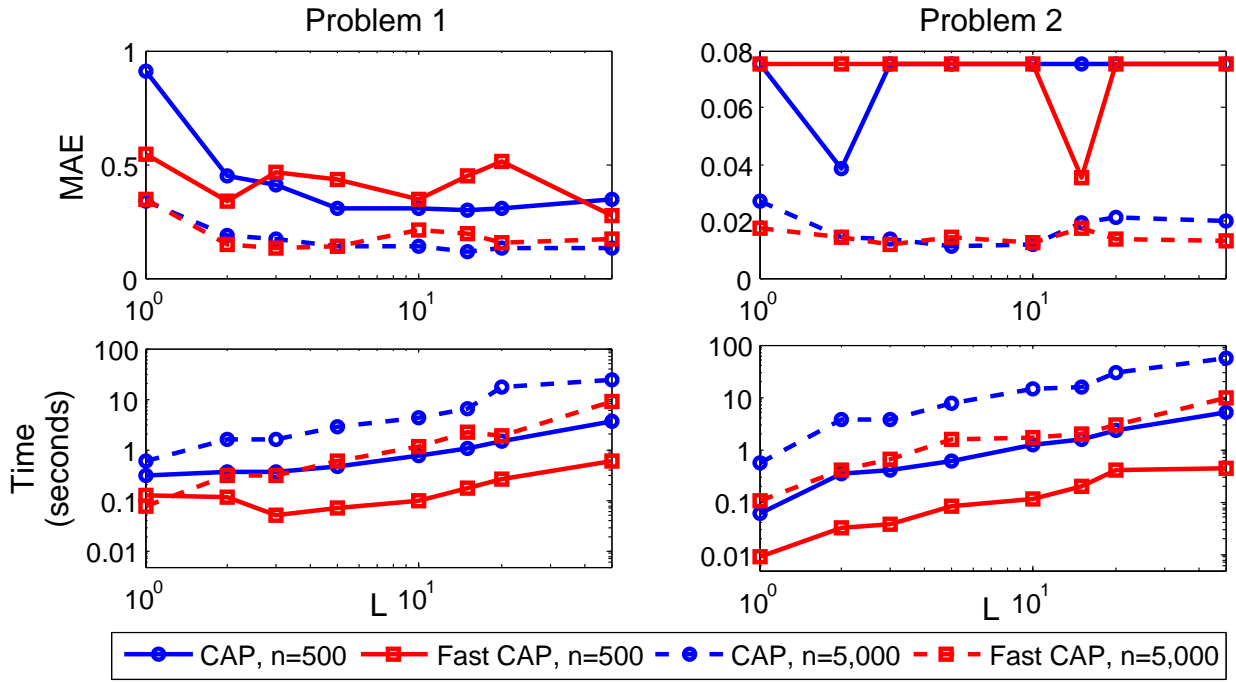}
\end{center}
\caption{(Top) Number of knots $L$ (log scale) vs. mean absolute error (linear scale) for CAP and Fast CAP.  (Bottom) Number of knots $L$ (log scale) vs. runtime in seconds (log scale). Both methods were run on Problem 1 (left) and Problem 2 (right) with $n = 500$ and $n = 5,000$.}
\label{fig:Lcompare}
\end{figure}

The number of knots, $L$, determines how many possible subsets will be examined during the splitting step. Like $D$, an increase in $L$ offers a better fit at the expense of increased computation. We compared values for $D$ ranging from $1$ to $50$ on Problems 1 and 2 with sample sizes of $n = 500$ and $n = 5,000$. Results are displayed in Figure \ref{fig:Lcompare}.

The changes in fit and runtime are less dramatic with $L$ than they are with $D$. After $L = 3$, the predictive error rates almost completely stabilized. Runtime increased as $\bigO(L)$ as expected. Due to the minimal increase in computation, we feel that $L = 10$ is a good choice for most settings.

\subsubsection{Empirical Rates of Convergence}
Although theoretical rates of convergence are not yet available for CAP, we are able to empirically examine them. Rates of convergence for multivariate convex regression have only been studied in two articles of which we are aware. First, \citet{AgFoMo11} studied rates of convergence for an estimator that is created by first smoothing the data, then evaluating the smoothed data over an $\epsilon$-net, and finally convexifying the net of smoothed data by taking the convex hull. They showed that the convexify step preserved the rates of the smoothing step. For most smoothing algorithms, these are minimax nonparametric rates, $n^{-\frac{1}{p+2}}$ with respect to the empirical $\ell_2$ norm. In the second article, \citet{HaDu11c} showed adaptive rates for a Bayesian model that places a prior over the set of all piecewise linear functions. Specifically, they showed that if the true mean function $f_0$ actually maps a $d$-dimensional linear subspace of $\mathcal{X}$ to $\R$, that is
\begin{align}\notag
f_0(\x) & = g_0(\x \A), & \A & \in \R^{p\times d},
\end{align}then their model achieves rates of $\log^{-1}(n) n^{-\frac{1}{d+2}}$ with respect to the empirical $\ell_2$ norm. Empirically, we see these types of adaptive rates with CAP.


\begin{figure}
\begin{center}
\includegraphics[width=6in, viewport = 0 260 620 535]{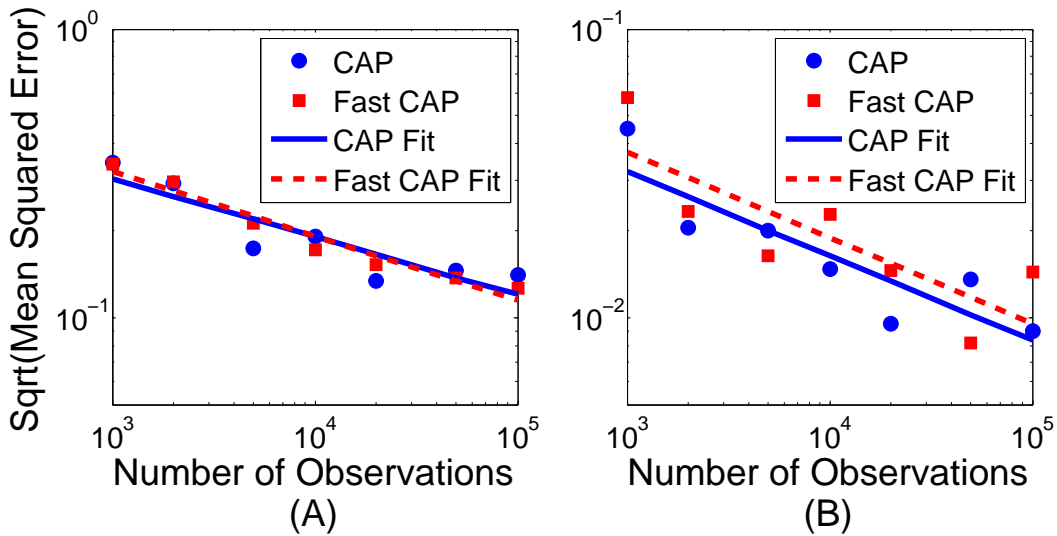}
\end{center}
\caption{Number of observations $n$ (log scale) vs. square root of mean squared error (log scale) for (A) Problem 1 and (B) Problem 2. Linear models are fit to find the empirical rate of convergence.}
\label{fig:rates}
\end{figure}

\begin{table}
\begin{center}
\begin{tabular}{l | r@{.}l | r@{.}l }
 Method & \multicolumn{2}{ c  }{Problem 1} & \multicolumn{2}{|c}{Problem 2}\\\hline
 Expected: Rates in $p$ & $-0$ & 1429 & $ -0$ & 0833\\
 Expected: Rates in $d$ & $-0$ & 2000 & $ -0$ & 3333 \\
 \hline
 Empirical: CAP & $ -0$ & 2003 & $-0$ & 2919 \\
 Empirical: Fast CAP & $-0$ & 2234 & $-0$ & 2969 \\\hline
\end{tabular}
\end{center}
\normalsize
\caption{Slopes for linear models fit to $\log(n)$ vs. $\log(\sqrt{MSE})$ in Figure \ref{fig:rates}. Expected slopes are given when: 1) rates are with respect to full dimensionality, $p$, and 2) rates are with respect to dimensionality of linear subspace, $d$. Empirical slopes are fit to mean squared error generated by CAP and Fast CAP. Note that all empirical slopes are closest to those for linear subspace rates rather than those for full dimensionality rates.}
\label{tab:rates}
\end{table}

In Figure \ref{fig:rates}, we plotted the number of observations against the square root of the mean squared error in a log-log plot for Problems 1 and 2. We then fitted a linear model for both CAP and Fast CAP. For Problem 1, $p=5$ but $d =3$, due to the sum in the quadratic term. Likewise, for Problem 2, $p = 10$ but $d = 1$ because it is an exponential of a linear combination. Under standard nonparametric rates, we would expect the slope of the linear model to be $-\frac{1}{7}$ for Problem 1 and $-\frac{1}{12}$ for Problem 2. However, we see slopes closer to $-\frac{1}{5}$ and $ -\frac{1}{3}$ for Problems 1 and 2, respectively; values are given in Table \ref{tab:rates}. These results strongly imply that CAP achieves adaptive convergence rates of the type shown by \citet{HaDu11c} for Problems 1 and 2.


\subsection{Pricing Stock Options}
In sequential decision problems, a decision maker takes an action based on a currently observed state of the world based on the current rewards of that action and possible future rewards. Approximate dynamic programming is a modeling method for such problems based on approximating a value-to-go function. Value-to-go functions, or simply ``value functions,'' give the value for each state of the world if all optimal decisions are made subsequently. 

Often value functions are known to be convex or concave in the state variable; this is common in options pricing, portfolio optimization and logistics problems.  In some situations, such as when a linear program is solved each time period to determine an action, a convex value function {\it is required} for computational tractability. Convex regression holds great promise for value function approximation in these problems.

To give a simple example for value function approximation, we consider pricing American basket options on the average of $N$ underlying assets.  Options give the holder the right---but not the obligation---to buy the underlying asset, in this case the average of $N$ individual assets, for a predetermined strike price $K$.  In an American option, this can be done at any time between the issue date and the maturity date, $T$.  However, American options are notoriously difficult to price, particularly when the underlying asset base is large.

A popular method for pricing American options uses approximate dynamic programming where continuation values are approximated via regression~\citep{Ca96,TsVa99,TsVa01,LoSc01}.  We summarize these methods as follows; see \citet{Gl04} for a more thorough treatment.  The underlying assets are assumed to have the sample path $\{X_1,\dots,X_T\}$, where $X_t = \{S_1(t),\dots,S_N(t)\}$ is the set of securities at time $t$.  At each time $t$, a continuation value function, $\bar{V}_{t}(X_{t})$, is estimated by regressing a value function for the next time period, $\bar{V}_{t+1}(X_{t+1})$, on the current state, $X_t$. The continuation value is the value of holding the option rather than exercising given the current state of the assets. The value function is defined to be the max of the current exercise value and the continuation value. Options are exercised when the current exercise value is greater than or equal to the continuation value.

The procedure to estimate the continuation values is as follows (as summarized in \citet{Gl04}):
\begin{enumerate}
	\item[0.] Define basket payoff function, $$h(X_t) = \max \left\{ \frac{1}{N} \sum_{k=1}^N S_k(T)-K,0\right\}.$$
	\item[1.] Sample $M$ independent paths, $\{X_{1j},\dots,X_{Tj}\},$ $j = 1,\dots,M$.
	\item[2.] At time $T$, set $\bar{V}_T(X_{Tj}) = h(X_{Tj})$.
	\item[3.] Apply backwards induction: for $t = T-1, \dots,1$,
	\begin{itemize}
		\item given $\{\bar{V}_{t+1}(X_{t+1 j})\}_{j=1}^M$, regress on $\{X_{tj}\}_{j=1}^M$ to get continuation value estimates $\{\bar{C}_t(X_{tj})\}_{j=1}^M$.
		\item set value function,
		\begin{equation}\notag
		\bar{V}_t(X_{tj}) = \max\left\{h(X_{tj}),\bar{C}_t(X_{tj})\right\}.\end{equation}	
		\end{itemize}
\end{enumerate}We use the value function defined by \citet{TsVa99}.

The regression values are used to create a policy that is implemented on a test set: exercise when the current exercise value is greater than or equal to the estimated continuation value. A good regression model is crucial to creating a good policy.


In previous literature, $\{C_t(X_{tj})\}_{j=1}^M$ has been estimated by regression splines for a single underlying asset~\citep{Ca96}, or least squares linear regression on a set of basis functions~\citep{TsVa99,LoSc01,Gl04}.  Regression on a set of basis functions becomes problematic when $X_{tj}$ is defined over moderate to high dimensional spaces. Well-defined sets of bases such as radial basis functions and polynomials require an exponential number of functions to span the space, while manually selecting basis functions can be quite difficult.  Since the expected continuation values are convex in the asset price for basket options, CAP is a simple, nonparametric alternative to these methods.


We compared the following methods: CAP and Fast CAP with $D = 3$, $L = 10$ for both and $P^{\prime} = \min(N,10)$; regression trees with constant leaves using the Matlab function {\tt classregtree}; least squares using the polynomial basis functions $$(1, S_i(t), S_i^2(t), S_i^3(t), S_i(t)S_j(t), h(X_t)), \ i = 1,\dots,N, \ j \neq i;$$ ridge regression on the same basis functions with ridge parameter chosen by 10-fold cross-validation each time period from values between $10^{-3}$ and $10^5$.

We compared value function regression methods as follows. We simulated on both $n = 10,000$ and $n = 50,000$ training samples for a 3-month American basket option with a number of underlying assets, $N$, varying between 1 and 30. Sample paths differed between $n = 10,000$ and $n = 50,000$. All asset sample paths were generated by a geometric Brownian motion with a drift of 0.05 and a volatility of 0.10. All assets had correlation 0.5 and starting value 100. The option had strike price 110. Policy values were approximated on 50,000 testing sample paths. An approximate upper bound was generated using the dual martingale methods of \citet{HaKo04} from value functions generated using polynomial basis functions based on the mean of the assets, $(1,Y,Y^2,Y^3,h(Y))$, where $Y_t = 1/N \sum_{i=1}^D X_{i}(t)$, with 5,000 samples. Approximate duality gaps were generated using these values and the policy for each method. All values are in discounted dollars. All computations were run in Matlab on a 2.66 GHz Intel i7 processor.


\begin{table}
\footnotesize
\begin{center}
\begin{tabular}{c | l | r@{.}l | r@{.}l | c | r@{.}l | r@{.}l | r@{.}l  | r@{.}l}
{\bf Assets} & \multicolumn{1}{c|}{{\bf Method}} & \multicolumn{4}{c|}{{\bf Policy Value}} & {\bf Upper} & \multicolumn{4}{c|}{{\bf Duality Gap}} & \multicolumn{4}{c}{{\bf Time}} \\
\hline
& & \multicolumn{2}{c |}{n=10,000} & \multicolumn{2}{c |}{n=50,000} & & \multicolumn{2}{c |}{n=10,000} & \multicolumn{2}{c |}{n=50,000} & \multicolumn{2}{c |}{n=10,000} & \multicolumn{2}{c }{n=50,000}\\\hline
\multirow{5}{*}{{\bf 1}} & CAP & \  \ 27 & 6576 & \ \ {\bf27} & {\bf1496} & \multirow{5}{*}{30.2054} & \ \ \ \ \ \  8 & 4\% &  \ \ \ \ \ 10 & 1\% & 199 & 8 sec  & 25 & 0 min \\
& Fast CAP & {\bf27} & {\bf6667} & 27 & 0820 & & 8 & 4\% & 10 & 3\% & 16 & 8 sec & 184 & 4 sec \\
& Tree & 11 & 2055 & 11 & 0855 & & 62 & 9\% & 63 & 3\% & 30 & 7 sec & 272 & 9 sec \\
& LS & 27 & 5509 & 26 & 7066 &  & 8 & 8\% & 11 & 6\% & 0 & 1 sec & 0 & 7 sec  \\
& Ridge & 27 & 4661& 26 & 7804 &  & 9 & 1\% & 11 & 3\% & 25 & 4 sec & 160 & 0 sec \\
\hline
\multirow{5}{*}{{\bf 2}} & CAP & 24 & 0728 & {\bf24} & {\bf 0048} & \multirow{5}{*}{25.6928} & 6 & 3\% & 6 &  6\% & 288 & 5 sec & 34 & 2 min \\
& Fast CAP & {\bf24} & {\bf2015} & 23 & 9870 & & 5 & 8\% & 6 & 6\% & 21 & 8 sec & 211 & 0 sec \\
& Tree & 8 & 8716 & 9 & 0387 & & 65 & 5\% & 64 & 8\% & 37 & 8 sec & 319 & 6 sec \\
& LS & 23 & 6191 & 23 & 8338 &  & 8 & 1\% & 7 & 2\% & 0 & 3 sec & 1 & 3 sec \\
& Ridge & 23 & 5072 & 23 & 6710 &  & 8 & 5\% & 7 & 9\% & 39 & 2 sec & 219 & 8 sec  \\
\hline
\multirow{5}{*}{{\bf 5}} & CAP & {\bf20} & {\bf4537} & 20 & 0850 & \multirow{5}{*}{21.6261} & 5 & 5\% & 3 & 8\% & 818 & 6 sec & 78 & 2 min \\
& Fast CAP & 20 & 4287 & {\bf20} & {\bf9179} & & 5 & 5\% & 3 & 3\% & 89 & 2 sec & 522 & 6 sec \\
& Tree & 6 & 6613 & 6 & 4748 & & 69 & 2\% & 70 & 1\% & 60 & 0 sec & 438 & 5 sec \\
& LS & 19 & 7035 & 20 & 7115 &  & 8 & 9\% & 4 & 2\% & 1 & 1 sec & 8 & 4 sec \\
& Ridge & 20 & 0640 & 20 & 7249 &  & 7 & 2\% & 4 & 2\% & 116 & 8 sec & 493 & 9 sec \\
\hline
\multirow{5}{*}{{\bf 10}} & CAP & 19 & 0669 & \multicolumn{2}{c|}{--} & \multirow{5}{*}{21.0776} & 9 & 8\% &  \multicolumn{2}{c|}{--}  & 26 & 4 min &  \multicolumn{2}{c}{--}  \\
& Fast CAP & {\bf19} & {\bf1225} & {\bf19} & {\bf8889} & & 9 & 3\% & 5 & 6\% & 179 & 4 sec & 16 & 7 min \\
& Tree & 5 & 1832 & 5 & 4041 & & 75 & 4\% & 74 & 4\% & 92 & 1 sec & 641 & 7 sec \\
& LS & 17 & 5132 & 19 & 5932 &  & 16 & 9\% & 7 & 0\% & 8 & 6 sec & 58 & 8 sec \\
& Ridge & 18 & 9371 & 19 & 5198 & & 10 & 2\% & 7 & 4\% & 284 & 8 sec & 23 & 1 min \\
\hline
\multirow{5}{*}{{\bf 15}} & CAP & 18 & 4681 & \multicolumn{2}{c|}{--} & \multirow{5}{*}{21.1540} & 12 & 7\% & \multicolumn{2}{c|}{--}  & 49 & 2 min & \multicolumn{2}{c}{--}  \\
& Fast CAP & {\bf18} & {\bf4915} & {\bf19} & {\bf0486}  & & 12 & 6\% & 10 & 0\% & 212 & 2 sec & 22 & 8 min \\
& Tree & 4 & 7876 & 4  & 9438  & & 77 & 4\% & 76 & 6\% & 127 & 2 sec  & 826 & 3 sec \\
& LS & 14 & 5642 & 18 & 5433 &  & 31 & 2\% & 12 & 3\% & 32 & 1 sec & 170 & 7 sec\\
& Ridge & 18 & 1109 & 18 & 6270  &  & 14 & 4\% & 11 & 9\% & 852 & 9 sec & 59 & 9 min \\
\hline
\multirow{5}{*}{{\bf 20}} & CAP & 17 & 5874 & \multicolumn{2}{c|}{--} & \multirow{5}{*}{19.2050} & 8 & 4\% & \multicolumn{2}{c|}{--} & 75 & 6 min & \multicolumn{2}{c}{--} \\
& Fast CAP & {\bf18} & {\bf0322} & {\bf19} & {\bf3104} & & 6 & 1\% & -0 & 5\% & 267 & 1 sec & 26 & 4 min\\
& Tree & 4 & 7098 & 4 & 6530 & & 75 & 5\% & 75 & 8\% & 157 & 6 sec & 19 & 2 min \\
& LS & 11 & 7465 &  18 & 5712 & & 38 & 8\% & 3 & 3\% & 57 & 6 sec & 310 & 8 sec \\
& Ridge & 17 & 3077 & \multicolumn{2}{c|}{--}  &  & 9 & 9\% & \multicolumn{2}{c|}{--} & 28 & 6 min & \multicolumn{2}{c}{--} \\
\hline
\multirow{5}{*}{{\bf 30}} & CAP & 17 & 1366 & \multicolumn{2}{c|}{--} & \multirow{5}{*}{19.7415} & 13 & 2\%  & \multicolumn{2}{c|}{--} & 152 & 9 min & \multicolumn{2}{c}{--} \\
& Fast CAP & {\bf17} & {\bf3011} & {\bf 18} & {\bf5674}& & 12 & 4\% & 5 & 9\% & 339 & 3 sec & 44 & 3 min \\
& Tree & 4 & 4110 & 4 & 3111 & & 77 & 7\% & 78 & 2\% & 223 & 1 sec & 24 & 4 min \\
& LS & 7 & 3082 & 15 & 5700 &  & 63 & 0\% & 21 & 1\% & 169 & 9 sec & 21 & 1 min \\
& Ridge & 16 & 6310 & \multicolumn{2}{c|}{--} &  & 15 & 8\% & \multicolumn{2}{c|}{--} & 96 & 5 min & \multicolumn{2}{c}{--} \\
\hline
\end{tabular}
\end{center}
\normalsize
\caption{CAP, Fast CAP, tree regression with constant leaves, least squares (LS) and ridge regularized least squares were compared for pricing American basket options. Lower bounds were generated for each method by implementing the policy given by the value function. Computational times for each method are given in seconds. Approximate upper bounds were generated using \citet{HaKo04}. Duality gaps were calculated as a percentage of the approximate upper bound. The best lower bounds for each basket and sample size are bolded.}\label{tab:options}
\end{table}




Results are displayed in Table \ref{tab:options}. We found that CAP and Fast CAP gave state of the art performance without the difficulties associated with linear functions, such as choosing basis functions and regularization parameters. We observed a decline in the performance of least squares as the number of assets grew due to overfitting. Ridge regularization greatly improved the least squares performance as the number of assets grew. Tree regression did poorly in all settings, likely due to overfitting in the presence of the non-symmetric error distribution generated by the geometric Brownian motion. These results suggest that CAP is robust even in less than ideal conditions, such as when data have heteroscedastic, non-symmetric error distributions.

Again, we noticed that while the performances of CAP and Fast CAP were comparable, the runtimes were about an order of magnitude different. On the larger problems, runtimes for Fast CAP were similar to those for unregularized least squares. This is likely because the number of covariates in the least squares regression grew like $N^2$, while all linear regressions in CAP only had $N$ covariates.

\section{Discussion}\label{sec:conclusions}

In this article, we presented Convex Adaptive Partitioning (CAP), a computationally efficient, theoretically sound and empirically robust method for regression subject to a convexity constraint. CAP is the first convex regression method to scale to large problems, both in terms of dimensions and number of observations. As such, we believe that it can allow the study of problems that were once thought to be computationally intractable. These include econometrics problems, like estimating consumer preference or production functions in multiple dimensions, approximating complex constraint functions for convex optimization, or creating convex value-to-go functions or response surfaces that can be easily searched in stochastic optimization. Our preliminary results are encouraging, but some important questions remain unanswered.
\begin{enumerate}
	\item What are the convergence rates for CAP? Are they adaptive, as they empirically seem to be?
	\item The current splitting proposal is effective but cumbersome. Are there less computationally intensive ways to refine the current partition?
	\item The modified stopping in Fast CAP provides substantially reduced runtimes with little performance degradation compared to CAP. Can this rule or a similarly efficient one be theoretically justified?
\end{enumerate}

We plan to explore this methodology further in the context of value function approximation, particularly in the situations where the value functions are searched as part of an objective function.

\bibliographystyle{agsm}

\end{document}